\providecommand{\U}[1]{\protect\rule{.1in}{.1in}}
\newtheorem{theorem}{Theorem}
\newtheorem{corollary}[theorem]{Corollary}
\newtheorem{definition}[theorem]{Definition}
\newtheorem{fact}[theorem]{Fact}
\newtheorem{lemma}[theorem]{Lemma}
\newtheorem{remark}[theorem]{Remark}
\newenvironment{proof}[1][Proof]{\noindent\textbf{#1.} }{\ \rule{0.5em}{0.5em}}
\let\originalleft\left
\let\originalright\right
\def\left#1{\mathopen{}\originalleft#1}
\def\right#1{\originalright#1\mathclose{}}
\begin{document}

\title{\textbf{Recursive quantum convolutional encoders are catastrophic: A simple
proof}}
\author{Monireh Houshmand\\\textit{\small Electrical Engineering Department,}
\\\textit{\small Imam Reza International University,}\\\textit{{\small {Mashhad, Iran}}} \and
Mark M.~Wilde\\\textit{{\small {School of Computer Science, McGill University,}}}\\\textit{{\small {Montreal, Quebec H3A 2A7, Canada}}}}
\date{\today}
\maketitle

\begin{abstract}
Poulin, Tillich, and Ollivier discovered an important separation between the
classical and quantum theories of convolutional coding, by proving that a
quantum convolutional encoder cannot be both non-catastrophic and recursive.
Non-catastrophicity is desirable so that an iterative decoding algorithm
converges when decoding a quantum turbo code whose constituents are quantum
convolutional codes, and recursiveness is as well so that a quantum turbo code
has a minimum distance growing nearly linearly with the length of the code,
respectively. Their proof of the aforementioned theorem was admittedly
\textquotedblleft rather involved,\textquotedblright\ and as such, it has been
desirable since their result to find a simpler proof. In this paper, we
furnish a proof that is arguably simpler. Our approach is group-theoretic---we
show that the subgroup of memory states that are part of a zero
physical-weight cycle of a quantum convolutional encoder is equivalent to the
centralizer of its \textquotedblleft finite-memory\textquotedblright\ subgroup
(the subgroup of memory states which eventually reach the identity memory
state by identity operator inputs for the information qubits and identity or
Pauli-$Z$ operator inputs for the ancilla qubits). After proving that this
symmetry holds for any quantum convolutional encoder, it easily follows that
an encoder is non-recursive if it is non-catastrophic. Our proof also
illuminates why this no-go theorem does not apply to entanglement-assisted
quantum convolutional encoders---the introduction of shared entanglement as a
resource allows the above symmetry to be broken.

\end{abstract}

\section{Introduction}

Quantum convolutional coding is an approach to quantum error correction that
allows a sender to encode a stream of quantum data in an online fashion, by
the repeated application of some encoding unitary \cite{PhysRevLett.91.177902}%
. As soon as the qubits have been encoded, the sender can then transmit them
over a noisy quantum channel to a receiver, whereupon the receiver can begin
to decode the qubits as he receives them. An advantage of this approach is
that the encoding complexity is linear in the number of qubits that are
encoded, and the most clear application of these codes is in the context of
quantum communication. One can also realize a quantum turbo code by serially
concatenating two quantum convolutional encoders \cite{PTO09}. Quantum turbo
codes have linear encoding and decoding complexity, and they are known (from
numerical simulations) to exhibit good performance when the noisy channel is a
depolarizing channel \cite{PTO09,WHB13}.

The notions of a state diagram, catastrophicity, and recursiveness are
prominent in the classical theory of convolutional coding
\cite{V71,book1999conv,McE02}, and Poulin, Tillich, and Ollivier (PTO) were
the first to provide satisfying definitions of these notions for a quantum
convolutional encoder \cite{PTO09}. In fact, a casual glance at the
development in Ref.~\cite{PTO09} might lead one to believe that there is
little difference between the classical and quantum theories of convolutional
coding, given that the classical definitions of state diagram,
catastrophicity, and recursiveness were essentially imported\ and then adapted
to the quantum setting.

In spite of the apparent similarities, PTO found a striking separation between
the classical and quantum theories of convolutional coding:\ in the quantum
case, encoders cannot be both non-catastrophic and recursive \cite{PTO09}.
Recall that in the classical case, convolutional encoders can certainly
possess both properties, and in fact, one looks for encoders satisfying both
when constructing classical turbo codes \cite{BDMP98}. Non-catastrophicity
ensures that an iterative decoding algorithm for a turbo code converges, while
recursiveness of one of the convolutional encoders ensures that the turbo code
has a minimum distance growing nearly linearly with the number of qubits being
encoded (this is true for both the classical and quantum cases
\cite{KU98,PTO09,AT11}). The important theorem of PTO\ was then later extended
to apply to quantum convolutional encoders for subsystem codes \cite{WH10b},
but Ref.~\cite{WH10b} also pointed out that this\ theorem does not apply if a
sender and receiver share entangled Bell states before communication begins
(in fact, the theory of entanglement-assisted quantum convolutional coding
bears more similarities with the classical theory than does the
\textquotedblleft unassisted\textquotedblright\ theory of quantum
convolutional coding).

The purpose of the present paper is to simplify the proof of the PTO\ theorem,
stated as \textquotedblleft recursive quantum convolutional encoders are
catastrophic.\textquotedblright\ One might deem it unnecessary to do so given
that PTO\ have already provided a proof, but it is often the case that if a
theorem is sufficiently important, then alternate proofs can provide more
insight into the theorem itself, perhaps spark new developments, or be useful
for pedagogical purposes. Since this theorem gives such a strong separation
between the classical and quantum theories, it clearly stands as one of the
most important results in the theory of quantum convolutional coding. In the
case of the PTO\ theorem, a quick glance over its original proof indicates
that it is complicated. Indeed, Ref.~\cite{PTO09}\ states that
\textquotedblleft the proof of Theorem 1 is rather involved,\textquotedblright%
\ and as such, they leave its many details to an appendix.

We now provide a brief summary of our proof of the PTO\ theorem for the expert
in quantum convolutional coding (this summary should become more
understandable for the non-expert after reviewing the background material in
the next section). First, it is apparently simpler to prove the
logically-equivalent contrapositive of the PTO\ theorem. That is, we prove the
statement \textquotedblleft If a quantum convolutional encoder is
non-catastrophic, then it is non-recursive.\textquotedblright\ Our approach
relies heavily on the use of group theory. We define a finite-memory subgroup
$\mathcal{F}_{0}$, which consists of all the memory states of a quantum
convolutional encoder that eventually reach the identity memory state via a
path in the state diagram resulting from applying combinations of identity and
Pauli-$Z$ operators on the ancilla qubits and identity operators on the
information qubits (we call such paths \textquotedblleft finite standard
paths\textquotedblright).\footnote{This formulation of the finite-memory
subgroup is different from that in Ref.~\cite{PTO09}, and it is a further step
that is helpful in simplifying the proof of the PTO\ theorem.} We define the
infinite-memory set to consist of all memory states that do not have this
property. This set on its own is not a subgroup, but if we quotient the full
Pauli group acting on the memory qubits by the finite-memory subgroup, then we
obtain a subgroup of infinite-memory states that obey the group property of
staying within the infinite-memory set when they are multiplied by each other.
We let $\mathcal{I}_{0}$ denote this subgroup, and for simplicity, we refer to
it as the \textquotedblleft infinite-memory subgroup.\textquotedblright\ We
then show that all memory states in the finite-memory subgroup $\mathcal{F}%
_{0}$ commute with the memory states that are part of a zero physical-weight
cycle (these latter states form a subgroup that we denote by $\mathcal{P}_{0}%
$). After doing so, we prove the converse statement, namely, that any memory
state which commutes with all elements of the finite-memory subgroup is part
of some zero-physical weight cycle. This establishes that the zero
physical-weight cycle subgroup $\mathcal{P}_{0}$ is equivalent to the
centralizer $C\left(  \mathcal{F}_{0}\right)  $ of the finite-memory subgroup
$\mathcal{F}_{0}$. This last observation easily leads to the statement of the
theorem. For a non-catastrophic encoder, any weight-one logical input drives
it to a memory state that commutes with all the elements of $C\left(
\mathcal{F}_{0}\right)  $, implying that this memory state belongs to
$\mathcal{F}_{0}$ (here we are invoking a double centralizer theorem that
applies for the Pauli group). Since we can conclude that this memory state
belongs to $\mathcal{F}_{0}$, it follows that there is a finite standard path
from it to the identity memory state, and as such, a non-catastrophic encoder
is non-recursive.

We structure this paper as follows. In the next section, we review the
definition of a quantum convolutional code, a quantum convolutional encoder, a
state diagram, catastrophicity, and recursiveness. Ref.~\cite{PTO09}%
\ established these definitions, but we repeat them here for convenience and
invite the expert to skip the next section. Section~\ref{sec:main-theorem}%
\ presents our proof of the PTO\ theorem along the lines outlined in the
previous paragraph. Section~\ref{sec:discussion}\ provides a brief discussion
to conclude this paper.

\section{Background}

\subsection{Quantum convolutional codes}

We begin by recalling some standard facts, and then we review the definition
of a quantum convolutional code. A Pauli sequence is a countably-infinite
tensor product of Pauli matrices:%
\[
\mathbf{A}=\bigotimes\limits_{i=0}^{\infty}A_{i},
\]
where each operator $A_{i}$ in the sequence is an element of the Pauli
group~$\Pi\equiv\left\{  I,X,Y,Z\right\}  $.\footnote{In our work, to be
precise, we are dealing with the quotient of the Pauli group by its center, so
that global phases are irrelevant. For simplicity, in this paper we refer to the quotient of the
Pauli group by its center as ``the Pauli group.''} Let $\Pi^{\mathbb{Z}^{+}}$ denote the set
of all Pauli sequences. A Pauli sequence is finite-weight if only finitely
many operators~$A_{i}$ in the sequence are equal to $X$, $Y$, or $Z$, and it
is an infinite-weight sequence otherwise.

\begin{definition}
[Quantum Convolutional Code]\label{def:QCC}A rate-$k/n$ quantum convolutional
code admits a representation with a basic set $\mathcal{G}_{0}$ of $n-k$
generators and all of their $n$-qubit shifts:%
\[
\mathcal{G}_{0}\equiv\left\{  \mathbf{G}_{i}\in\Pi^{\mathbb{Z}^{+}}:1\leq
i\leq n-k\right\}  .
\]
In order to form a quantum convolutional code, these generators should commute
with themselves and all of the $n$-qubit shifts of themselves and the other generators.
\end{definition}

Equivalently, a rate-$k/n$ quantum convolutional code is specified by $n-k$
generators $h_{1}$, $h_{2}$, $\ldots$, $h_{n-k}$, where%
\begin{equation}%
\begin{array}
[c]{cc}%
h_{1} & =\\
h_{2} & =\\
\vdots & \\
h_{n-k} & =
\end{array}%
\begin{array}
[c]{c}%
h_{1,1}\\
h_{2,1}\\
\vdots\\
h_{n-k,1}%
\end{array}%
\begin{array}
[c]{c}%
|\\
|\\
\vdots\\
|
\end{array}%
\begin{array}
[c]{c}%
h_{1,2}\\
h_{2,2}\\
\vdots\\
h_{n-k,2}%
\end{array}%
\begin{array}
[c]{c}%
|\\
|\\
\vdots\\
|
\end{array}%
\begin{array}
[c]{c}%
\cdots\\
\cdots\\
\ \\
\cdots
\end{array}%
\begin{array}
[c]{c}%
|\\
|\\
\vdots\\
|
\end{array}%
\begin{array}
[c]{c}%
h_{1,d_{1}}\\
h_{2,d_{2}}\\
\vdots\\
h_{n-k,d_{n-k}}%
\end{array}
. \label{eq:general-stab-generators}%
\end{equation}
Each entry $h_{i,j}$ is an $n$-qubit Pauli operator and $d_{i}$ is the degree
of generator$~h_{i}$ (in general, the degrees $d_{i}$ can be different from
each other). We obtain the other generators of the code by shifting the above
generators to the right by multiples of $n$ qubits. (In the above, note that
the entries $h_{1,d_{1}}$, $h_{2,d_{2}}$, \ldots, $h_{n-k,d_{n-k}}$ are not
required to be in the same column, but we have written them in the above way
for convenience.) Note also that the generators could have an infinite degree
$d_{i}$, but in this case, they should have some periodicity. We can obtain
infinite-weight generators from a finite-weight representation, for example,
by multiplying the first generator by the second one and all of its shifts,
and this gives a different representation of the same code.

\subsection{Quantum convolutional encoders}%

%TCIMACRO{\FRAME{ftbpFU}{3.2396in}{3.2007in}{0pt}{\Qcb{The encoder $U$ for a
%quantum convolutional code that has four physical qubits for every two
%information qubits. The encoder $U$ acts on $m$ memory qubits, two ancilla
%qubits, and two information qubits to produce four output physical qubits to
%be sent over the channel and $m$ output memory qubits to be fed into the next
%round of encoding.}}{\Qlb{fig:conv-encoder}}{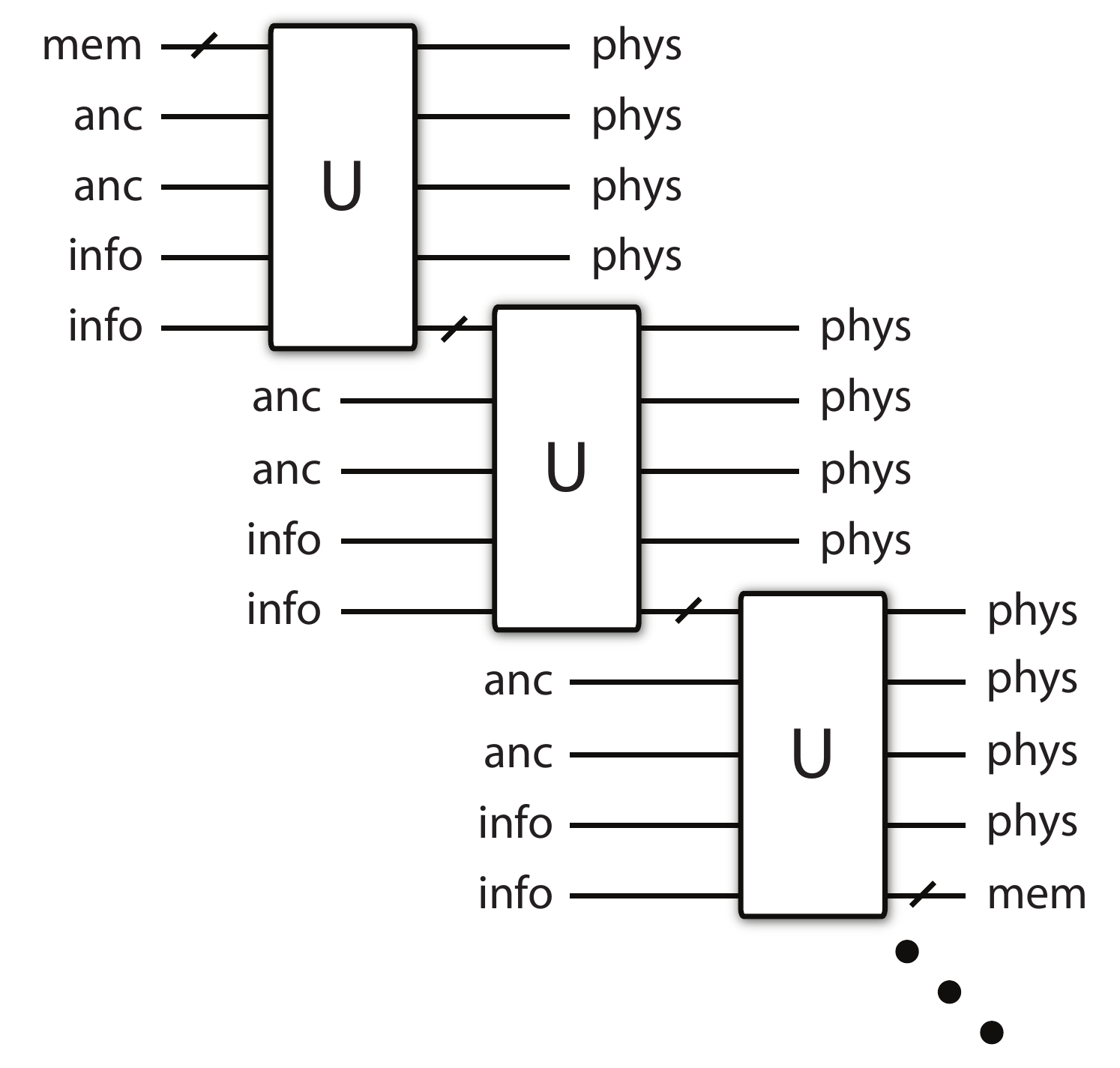}%
%{\special{ language "Scientific Word";  type "GRAPHIC";
%maintain-aspect-ratio TRUE;  display "USEDEF";  valid_file "F";
%width 3.2396in;  height 3.2007in;  depth 0pt;  original-width 5.5867in;
%original-height 4.3336in;  cropleft "0";  croptop "1";  cropright "1";
%cropbottom "0";  filename 'conv-encoder.pdf';file-properties "XNPEU";}}}%
%BeginExpansion
\begin{figure}
[ptb]
\begin{center}
\includegraphics[
natheight=4.333600in,
natwidth=5.586700in,
height=3.2007in,
width=3.2396in
]%
{}%
\caption{The encoder $U$ for a quantum convolutional code that has four
physical qubits for every two information qubits. The encoder $U$ acts on $m$
memory qubits, two ancilla qubits, and two information qubits to produce four
output physical qubits to be sent over the channel and $m$ output memory
qubits to be fed into the next round of encoding.}%
\label{fig:conv-encoder}%
\end{center}
\end{figure}
%EndExpansion

Figure~\ref{fig:conv-encoder}\ depicts an example of an encoder for a quantum
convolutional code. The encoder depicted there has four physical qubits for
every two information qubits. The unencoded qubit stream might have the
following form:%
\begin{equation}
\left\vert 0\right\rangle \left\vert 0\right\rangle \left\vert \phi
_{1}\right\rangle \left\vert \phi_{2}\right\rangle \left\vert 0\right\rangle
\left\vert 0\right\rangle \left\vert \phi_{3}\right\rangle \left\vert \phi
_{4}\right\rangle \cdots, \label{eq:unencoded-stream}%
\end{equation}
so that an ancilla qubit appears as every first and second qubit and an
information qubit appears as every third and fourth qubit (generally, these
information qubits can be entangled with each other and even with an
inaccessible reference system, but we write them as product states for simplicity).

More generally, a convolutional encoder acts on some number$~m$ of memory
qubits, $n-k$ ancilla qubits, and $k$~information qubits, and it produces
$n$~output physical qubits and $m$~output memory qubits to be fed into the
next round of encoding. It should transform an unencoded Pauli $Z$ operator
acting on the $i^{\text{th}}$ ancilla qubit to the $i^{\text{th}}$ stabilizer
generator $h_{i}$ in (\ref{eq:general-stab-generators}). That is, it should be
some Clifford transformation.\footnote{A Clifford transformation is a unitary
operator that preserves the Pauli group under unitary conjugation.} The first
application of the encoder $U$ results in some Pauli operator $g_{i,1}$ acting
on the $m$ output memory qubits. The second application of the encoder $U$
results in some other Pauli operator $g_{i,2}$ acting on the $m$ output memory
qubits and so on. The shift invariance of the overall encoding guarantees that
shifts of the unencoded$~Z$ Pauli operators transform to appropriate shifts of
the generators. A convolutional encoder for the code  performs the following
transformation:%
\begin{equation}%
\begin{tabular}
[c]{c|c|c}%
Mem. & \multicolumn{1}{|c|}{Anc.} & Info.\\\hline\hline
$I^{\otimes m}$ & $Z(1)$ & $I^{\otimes k}$\\
$g_{1,1}$ & \multicolumn{1}{|c|}{$I^{\otimes n-k}$} & $I^{\otimes k}$\\
$\vdots$ & \multicolumn{1}{|c|}{$\vdots$} & $\vdots$\\
$g_{1,d_{1}-2}$ & \multicolumn{1}{|c|}{$I^{\otimes n-k}$} & $I^{\otimes k}$\\
$g_{1,d_{1}-1}$ & \multicolumn{1}{|c|}{$I^{\otimes n-k}$} & $I^{\otimes k}%
$\\\hline
$I^{\otimes m}$ & $Z(2)$ & $I^{\otimes k}$\\
$g_{2,1}$ & \multicolumn{1}{|c|}{$I^{\otimes n-k}$} & $I^{\otimes k}$\\
$\vdots$ & \multicolumn{1}{|c|}{$\vdots$} & $\vdots$\\
$g_{2,d_{2}-2}$ & \multicolumn{1}{|c|}{$I^{\otimes n-k}$} & $I^{\otimes k}$\\
$g_{2,d_{2}-1}$ & \multicolumn{1}{|c|}{$I^{\otimes n-k}$} & $I^{\otimes k}%
$\\\hline
$\vdots$ & \multicolumn{1}{|c|}{$\vdots$} & $\vdots$\\\hline
$I^{\otimes m}$ & $Z(s)$ & $I^{\otimes k}$\\
$g_{s,1}$ & \multicolumn{1}{|c|}{$I^{\otimes n-k}$} & $I^{\otimes k}$\\
$\vdots$ & \multicolumn{1}{|c|}{$\vdots$} & $\vdots$\\
$g_{s,d_{s}-2}$ & \multicolumn{1}{|c|}{$I^{\otimes n-k}$} & $I^{\otimes k}$\\
$g_{s,d_{s}-1}$ & \multicolumn{1}{|c|}{$I^{\otimes n-k}$} & $I^{\otimes k}$%
\end{tabular}
\rightarrow%
\begin{tabular}
[c]{c|c}%
Phys. & Mem.\\\hline\hline
$h_{1,1}$ & $g_{1,1}$\\
$h_{1,2}$ & $g_{1,2}$\\
$\vdots$ & $\vdots$\\
$h_{1,d_{1}-1}$ & $g_{1,d_{1}-1}$\\
$h_{1,d_{1}}$ & $I^{\otimes m}$\\\hline
$h_{2,1}$ & $g_{2,1}$\\
$h_{2,2}$ & $g_{2,2}$\\
$\vdots$ & $\vdots$\\
$h_{2,d_{2}-1}$ & $g_{2,d_{2}-1}$\\
$h_{2,d_{2}}$ & $I^{\otimes m}$\\\hline
$\vdots$ & $\vdots$\\\hline
$h_{s,1}$ & $g_{s,1}$\\
$h_{s,2}$ & $g_{s,2}$\\
$\vdots$ & $\vdots$\\
$h_{s,d_{s}-1}$ & $g_{s,d_{s}-1}$\\
$h_{s,d_{s}}$ & $I^{\otimes m}$%
\end{tabular}
\label{eq:general-encoder}%
\end{equation}
where, as a visual aid, we have separated the Pauli operators acting on the
memory qubits, ancilla qubits, and information qubits at the input with a
vertical bar and we have done the same for those acting on the physical qubits
and memory qubits at the output. The parameter $m$ is the number of memory
qubits, $k$ is the number of information qubits, $n-k$ is the number of
ancilla qubits, and we make the abbreviation $s\equiv n-k$. A horizontal bar
separates the entries of the encoder needed to encode the first generator from
the entries needed to encode the second generator. Each $g_{i,j}$ is a Pauli
operator acting on some number~$m$ of memory qubits---these operators should
be consistent with the input-output commutation relations of the encoder
\cite{arXiv:1105.0649v1}. We stress that the above input-output relations only
partially specify the encoder such that it produces a code with the stabilizer
generators in (\ref{eq:general-stab-generators}), and there is still a fair
amount of freedom remaining in the encoding. Note that it could be the case
that the encoder transforms a Pauli-$Z$ operator acting on an ancilla qubit to
a Pauli operator with infinite weight (though, the resulting operator would
have some periodicity because the encoder acts on a finite number of qubits).

\subsection{State diagram}

The state diagram for a quantum convolutional encoder is the most important
tool for determining whether it is catastrophic or recursive~\cite{PTO09}. The
state diagram is similar to one for a classical
encoder~\cite{V71,book1999conv,McE02}, with an important exception for the
quantum case that incorporates the fact that the logical operators of a
quantum code are unique up to multiplication by the stabilizer generators. The
state diagram allows us to analyze the \textquotedblleft
flow\textquotedblright\ of the logical operators through the quantum
convolutional encoder.

\begin{definition}
[State Diagram]The state diagram for a quantum convolutional encoder is a
directed multigraph with $4^{m}$ vertices that we can think of as
\textquotedblleft memory states,\textquotedblright\ where $m$ is the number of
memory qubits in the encoder. Each memory state corresponds to an $m$-qubit
Pauli operator~$M$ that acts on the memory qubits. We connect two vertices $M$
and $M^{\prime}$ with a directed edge from $M$ to $M^{\prime}$ and label this
edge as $\left(  L,P\right)  $ if the encoder takes the $m$-qubit Pauli
operator~$M$, an $\left(  n-k\right)  $-qubit Pauli operator $S\in\left\{
I,Z\right\}  ^{n-k}$ acting on the $n-k$ ancilla qubits, and a $k$-qubit Pauli
operator $L$ acting on the information qubits, to an $n$-qubit Pauli operator
$P$ acting on the $n$ physical qubits and an $m$-qubit Pauli
operator~$M^{\prime}$ acting on the $m$ memory qubits:%
\[%
\begin{tabular}
[c]{c|c|c}%
\emph{Mem.} & \emph{Anc.} & \emph{Info.}\\\hline\hline
$M$ & $S$ & $L$%
\end{tabular}
\ \ \ \ \underrightarrow{\ \text{encoder\ }}\ \ \
\begin{tabular}
[c]{c|c}%
\emph{Phys.} & \emph{Mem.}\\\hline\hline
$P$ & $M^{\prime}$%
\end{tabular}
\ \ \ .
\]
The labels $L$ and $P$ are the respective logical and physical labels of the edge.
\end{definition}

\subsection{Catastrophicity}

We now review the definition of catastrophicity from Ref.~\cite{PTO09}, which
is based on the classical notion of catastrophicity from
Refs.~\cite{V71,McE02}. The essential idea behind catastrophic error
propagation is that an error with finite weight, after being fed through the
inverse of the encoder, could propagate infinitely throughout the decoded
information qubit stream without triggering syndromes corresponding to these
errors. The only way that this catastrophic error propagation can occur is if
there is some cycle in the state diagram where all of the edges along the
cycle have physical labels equal to the identity operator, while at least one
of the edges has a logical label that is not equal to the identity. If such a
cycle exists, it implies that the finite-weight channel error produces an
infinite-weight information qubit error without triggering syndrome bits
corresponding to this error (if it did trigger syndrome bits, this cycle would
not be in the state diagram), and an iterative decoding algorithm such as that
presented in Ref.~\cite{PTO09}\ is not able to detect these errors. So, we can
now state the definition of a catastrophic encoder.

\begin{definition}
[Catastrophic Encoder]\label{def:catastrophic}A quantum convolutional encoder
acting on memory qubits, information qubits, and ancilla qubits is
catastrophic if there exists a cycle in its state diagram where all edges in
the cycle have zero physical weight, but there is at least one edge in the
cycle with non-zero logical weight.
\end{definition}

\subsection{Recursiveness}

Recursiveness or lack thereof is a fundamental property of a quantum
convolutional encoder as discussed in Refs.~\cite{PTO09,AT11}.

\begin{definition}
[Path]A path in the state diagram is a sequence of memory states $M_{1}$,
$M_{2}$, \ldots\ (with corresponding edges) such that $M_{i}\rightarrow
M_{i+1}$ belongs to the state diagram.
\end{definition}

\begin{definition}
[Recursive encoder]\label{def:recursive}An admissable path is a path in the
state diagram for which its first edge is not part of a zero physical-weight
cycle. Consider any vertex belonging to a zero physical-weight loop and any
admissable path beginning at this vertex that also has logical weight one. The
encoder is recursive if all such paths do not contain a zero physical-weight loop.
\end{definition}

We can gain some intuition behind the above definition by recalling the
definition of a recursive classical convolutional encoder. In the classical
case, an encoder is recursive if it has an infinite impulse response---that
is, if it outputs an infinite-weight, periodic sequence in response to an
input consisting of a single \textquotedblleft one\textquotedblright\ followed
by an infinite number of \textquotedblleft zeros.\textquotedblright%
\ Definition~\ref{def:recursive} above for the quantum case ensures that the
response to a single non-identity Pauli operator (one of $\{X,Y,Z\}$) at a
single logical input along with the identity operator at all other logical
inputs leads to a periodic output sequence of Pauli operators with infinite
weight. Though, the definition above ensures that this is not only the case
for the above sequence but also for one in which the ancilla qubit inputs can
be chosen arbitrarily from $\left\{  I,Z\right\}  $. Thus, it is a much more
stringent condition for a quantum convolutional encoder to be recursive.

\section{Main result:\ A simple proof that recursive quantum convolutional
encoders are catastrophic}

\label{sec:main-theorem}We now provide a proof of the PTO\ theorem. Our
strategy is to show that the subgroup consisting of memory states in zero
physical-weight cycles is equivalent to the centralizer of the finite-memory
subgroup (which we define below). After doing so, it follows rather easily
that a non-catastrophic encoder must be non-recursive.

A zero physical-weight cycle has the following structure:%
\begin{equation}%
\begin{tabular}
[c]{c|c|c}%
Mem. & Anc. & Info.\\\hline\hline
$M_{1}$ & $S_{1}$ & $L_{1}$\\
$M_{2}$ & $S_{2}$ & $L_{2}$\\
$\vdots$ & $\vdots$ & $\vdots$\\
$M_{p}$ & $S_{p}$ & $L_{p}$%
\end{tabular}
\ \ \rightarrow%
\begin{tabular}
[c]{c|c}%
Phys. & Mem.\\\hline\hline
$I^{\otimes n}$ & $M_{2}$\\
$I^{\otimes n}$ & $M_{3}$\\
$\vdots$ & $\vdots$\\
$I^{\otimes n}$ & $M_{1}$%
\end{tabular}
\ \ \ \ , \label{eq:physical-weight-cycle}%
\end{equation}
where $M_{1}$, \ldots, $M_{p}$ are arbitrary $m$-qubit Pauli operators acting
on the $m$ memory qubits, the operators $S_{i}\in\{I,Z\}^{\otimes(n-k)}$ act
on the $n-k$ ancilla qubits, and the operators $L_{i}$ are arbitrary $k$-qubit
Pauli operators acting on the $k$ information qubits. A zero physical-weight
cycle with non-zero logical weight is a zero physical-weight cycle such that
at least one of the operators $L_{i}$ is not equal to the identity operator.

Let $\mathcal{P}_{0}$ denote the subgroup of the Pauli group on $m$ qubits
consisting of memory states that are part of a zero physical-weight cycle.
That $\mathcal{P}_{0}$ is a subgroup follows from the observation that if
$M_{1}$ is part of a zero physical-weight cycle and if $M_{2}$ is part of a
zero physical-weight cycle, then $M_{1}\cdot M_{2}$ must be as well because we
can combine the cycles of which $M_{1}$ and $M_{2}$ are a part simply by
multiplying each transition of the individual cycles to get the new
transitions for the zero physical-weight cycle of which $M_{1}\cdot M_{2}$ is involved.

\begin{definition}
[Standard path]Consider a given memory state $M_{0}$. A standard path
originating from $M_{0}$ is a path with the following structure:%
\begin{equation}%
\begin{tabular}
[c]{c|c|c}%
Mem. & Anc. & Info.\\\hline\hline
$M_{0}$ & $S_{0}$ & $I^{\otimes k}$\\
$M_{1}$ & $S_{1}$ & $I^{\otimes k}$\\
$M_{2}$ & $S_{2}$ & $I^{\otimes k}$\\
$\vdots$ & $\vdots$ & $\vdots$%
\end{tabular}
\ \rightarrow%
\begin{tabular}
[c]{c|c}%
Phys. & Mem.\\\hline\hline
$P_{1}$ & $M_{1}$\\
$P_{2}$ & $M_{2}$\\
$P_{3}$ & $M_{3}$\\
$\vdots$ & $\vdots$%
\end{tabular}
\ \ \ , \label{eq:standard-path}%
\end{equation}
where $S_{i}\in\left\{  I,Z\right\}  ^{\otimes\left(  n-k\right)  }$ and
$P_{i}\in\left\{  I,X,Y,Z\right\}  ^{\otimes n}$. From the above, we can see
that a standard path originating from $M_{0}$ is such that the output memory
state of the $i^{\text{th}}$ row is the same as the input memory state of the
$(i+1)^{\text{th}}$ row, with the ancilla operators chosen from $\left\{
I,Z\right\}  ^{\otimes\left(  n-k\right)  }$ and the input logical operators
equal to the identity for every transition.
\end{definition}

We now define two types of standard paths: finite standard paths and infinite
standard paths.

\begin{definition}
[Finite standard path]A finite standard path is a standard path which has
\textquotedblleft finite duration,\textquotedblright\ i.e., it ends in the
identity memory state after traversing a finite number of memory states for
some $S_{i}\in\left\{  I,Z\right\}  ^{\otimes\left(  n-k\right)  }$:%
\begin{equation}%
\begin{tabular}
[c]{c|c|c}%
Mem. & Anc. & Info.\\\hline\hline
$M_{0}$ & $S_{0}$ & $I^{\otimes k}$\\
$M_{1}$ & $S_{1}$ & $I^{\otimes k}$\\
$M_{2}$ & $S_{2}$ & $I^{\otimes k}$\\
$\vdots$ & $\vdots$ & $\vdots$\\
$M_{t-2}$ & $S_{t-2}$ & $I^{\otimes k}$\\
$M_{t-1}$ & $S_{t-1}$ & $I^{\otimes k}$%
\end{tabular}
\ \rightarrow%
\begin{tabular}
[c]{c|c}%
Phys. & Mem.\\\hline\hline
$P_{1}$ & $M_{1}$\\
$P_{2}$ & $M_{2}$\\
$P_{3}$ & $M_{3}$\\
$\vdots$ & $\vdots$\\
$P_{t-1}$ & $M_{t-1}$\\
$P_{t}$ & $I^{\otimes m}$%
\end{tabular}
\ \ \ , \label{eq:finite-standard-path}%
\end{equation}
where $M_{0}$, $M_{1}$, \ldots, $M_{t-1}$ are not equal to the identity. A
finite-memory state is a memory state from which a finite standard path
originates. (In (\ref{eq:finite-standard-path}), $M_{0}$, $M_{1}$, \ldots,
$M_{t-1}$ are finite-memory states.)
\end{definition}

\begin{fact}
[Finite-memory subgroup]\label{fact:finite-mem-sub}The finite-memory states
form a subgroup of the Pauli group on $m$ qubits, with the group property
being that each element is the starting memory state for some finite standard
path. That is, if $M_{1}$ and $M_{2}$ are finite-memory states, then the
finite standard path originating from $M_{1}\cdot M_{2}$ is constructed by
multiplying the finite standard paths originating from $M_{1}$ and $M_{2}$
which individually lead to the identity memory state. Thus, this path
eventually ends in the identity memory state and as such is a finite standard
path. Let $\mathcal{F}_{0}$ denote this subgroup (we will refer to it as the
finite-memory subgroup).
\end{fact}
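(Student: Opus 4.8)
The plan is to reduce the claim to two things: that $\mathcal{F}_0$ is nonempty and that it is closed under multiplication. Because we work with the Pauli group modulo its center, every element squares to the identity, so each memory state is its own inverse; consequently a nonempty subset closed under the group product is automatically a subgroup, and no separate inverse-closure argument is needed. Nonemptiness is immediate: since the encoder $U$ is a unitary (indeed Clifford) transformation, conjugation by $U$ fixes the all-identity operator, so the state diagram contains the self-loop $(I^{\otimes m}, I^{\otimes(n-k)}, I^{\otimes k}) \to (I^{\otimes n}, I^{\otimes m})$, and we may regard the identity memory state as trivially finite-memory (it is already the terminal state), so that $\mathcal{F}_0$ contains the group identity. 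The real content is therefore closure.

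The key tool is the multiplicativity of the transition map induced by the encoder. Since the encoder acts as a Clifford transformation, conjugation by $U$ is a homomorphism on the Pauli group modulo phases: if the two transitions
\[
(M, S, L) \to (P, M'), \qquad (\widetilde{M}, \widetilde{S}, \widetilde{L}) \to (\widetilde{P}, \widetilde{M}')
\]
both belong to the state diagram, then so does their entrywise product
\[
(M\widetilde{M}, S\widetilde{S}, L\widetilde{L}) \to (P\widetilde{P}, M'\widetilde{M}').
\]
I would record two structural consequences that are needed to preserve the defining features of a standard path: first, if $S, \widetilde{S} \in \{I, Z\}^{\otimes(n-k)}$ then $S\widetilde{S} \in \{I, Z\}^{\otimes(n-k)}$ as well, since the product of two diagonal Pauli strings is diagonal; and second, if $L = \widetilde{L} = I^{\otimes k}$ then $L\widetilde{L} = I^{\otimes k}$. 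Thus multiplying two standard-path transitions row by row again yields standard-path transitions.

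With these observations in hand, the closure argument proceeds as follows. Let $M_1$ and $M_2$ be finite-memory states, with finite standard paths of (possibly unequal) lengths $t_1$ and $t_2$ terminating at the identity memory state. I would first pad the shorter path using the identity self-loop noted above, so that both paths are described by memory-state sequences of a common length $t = \max\{t_1, t_2\}$, each ending in $I^{\otimes m}$ and remaining there; the ancilla inputs of the padded segment are $I^{\otimes(n-k)}$, which lie in $\{I,Z\}^{\otimes(n-k)}$, and the logical inputs remain $I^{\otimes k}$. Multiplying the two padded sequences entrywise and invoking the multiplicativity above produces a valid standard path that starts at $M_1 \cdot M_2$ and reaches $I^{\otimes m}$ by step $t$. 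Finally I would truncate this product path at the first step at which its memory state equals the identity; the truncated path is a genuine finite standard path from $M_1 \cdot M_2$, since by construction all of its intermediate memory states are non-identity. Hence $M_1 \cdot M_2 \in \mathcal{F}_0$, which together with the preceding remarks completes the proof.

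The step I expect to be the main obstacle is the last one, namely the bookkeeping that makes the combined path a legitimate finite standard path rather than merely a standard path that happens to reach the identity. Two issues must be handled with care: the two given paths need not have the same length, which is why the identity self-loop is essential for padding; and the entrywise product $M_i^{(1)} \cdot M_i^{(2)}$ may equal the identity at some intermediate step even when neither factor is, so one cannot simply declare the product path to have length $t$. Truncating at the first return to the identity memory state resolves both difficulties at once, and this is the one place where the argument is slightly more delicate than the one-line sketch in the statement suggests.
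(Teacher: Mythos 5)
Your proof is correct and takes essentially the same route as the paper's: Fact~\ref{fact:finite-mem-sub} is justified there by exactly this entrywise multiplication of the two finite standard paths, relying on the encoder being a Clifford (hence multiplicative) transformation. The padding by the identity self-loop and the truncation at the first return to the identity memory state are bookkeeping details that the paper leaves implicit, and you have filled them in correctly.
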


\begin{definition}
[Infinite standard path]An infinite standard path is a standard path which has
infinite duration, i.e., it does not ever end in the identity memory state
when the input on the ancilla qubits is chosen from the set $\left\{
I,Z\right\}  ^{\otimes\left(  n-k\right)  }$ and the input for the logical
qubits is the identity$~I^{\otimes k}$. Since memory state operators act on a
finite number of qubits, the input memory state of the $i^{th}$ row is the
same as the output memory state of the $j^{th}$ row, for some $j\geq i$ and
$S_{0},S_{1},\ldots,S_{i},\ldots,S_{j}\in\left\{  I,Z\right\}  ^{\otimes
\left(  n-k\right)  }$:%
\begin{equation}%
\begin{tabular}
[c]{c|c|c}%
Mem. & Anc. & Info.\\\hline\hline
$M_{0}$ & $S_{0}$ & $I^{\otimes k}$\\
$M_{1}$ & $S_{1}$ & $I^{\otimes k}$\\
$M_{2}$ & $S_{2}$ & $I^{\otimes k}$\\
$\vdots$ & $\vdots$ & $\vdots$\\
$M_{i}$ & $S_{i}$ & $I^{\otimes k}$\\
$\vdots$ & $\vdots$ & $\vdots$\\
$M_{j}$ & $S_{j}$ & $I^{\otimes k}$\\
&  &
\end{tabular}
\ \rightarrow%
\begin{tabular}
[c]{c|c}%
Phys. & Mem.\\\hline\hline
$P_{1}$ & $M_{1}$\\
$P_{2}$ & $M_{2}$\\
$P_{3}$ & $M_{3}$\\
$\vdots$ & $\vdots$\\
$P_{i+1}$ & $M_{i+1}$\\
$\vdots$ & $\vdots$\\
$P_{j+1}$ & $M_{i}$\\
&
\end{tabular}
\ \ \ . \label{eq:infinite-standard-path}%
\end{equation}
In the above, the memory states $M_{0}$, $M_{1}$, \ldots, $M_{j}$ are not
equal to the identity operator. Observe that different sequences $S_{0}$,
\ldots, $S_{j}$ can lead to different cycles starting at some memory state
$M_{i}$, but that for any fixed sequence $S_{0}$, \ldots, $S_{j}$ we always
have a cycle of this form. An infinite-memory state is a memory state which is
never the starting memory state for a finite standard path.
\end{definition}

\begin{fact}
[Infinite-memory set]The set of all infinite-memory states does not form a
subgroup. Indeed, let $M_{1}$ be an infinite-memory state and let $M_{2}$ be a
finite-memory state. The state $M_{3}\equiv M_{1}\cdot M_{2}$ is then an
infinite-memory state that eventually enters the same loop as $M_{1}$.
However, it is clear that the state $M_{3}\cdot M_{1}=M_{2}$ is a
finite-memory state obtained by multiplying two infinite-memory states, so
that all states in the infinite-memory set need not obey the group property.
\end{fact}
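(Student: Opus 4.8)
The plan is to show that the infinite-memory set fails to be closed under the group multiplication; for a nonempty subset of a finite group this is exactly what obstructs it from being a subgroup. Accordingly, I would exhibit two infinite-memory states whose product is a finite-memory state. Concretely, assuming at least one infinite-memory state exists (otherwise the set is empty and the claim is immediate for lack of the identity), fix an infinite-memory state $M_{1}$ and a finite-memory state $M_{2}$, and set $M_{3}\equiv M_{1}\cdot M_{2}$.

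First I would argue that $M_{3}$ is itself an infinite-memory state, via a short contradiction resting on Fact~\ref{fact:finite-mem-sub}. If $M_{3}$ belonged to $\mathcal{F}_{0}$, then, since $\mathcal{F}_{0}$ is a subgroup and $M_{2}\in\mathcal{F}_{0}$, the product $M_{3}\cdot M_{2}$ would lie in $\mathcal{F}_{0}$ as well; but $M_{3}\cdot M_{2}=M_{1}\cdot M_{2}\cdot M_{2}=M_{1}$ because $M_{2}^{2}=I^{\otimes m}$ in the Pauli group modulo its center, contradicting that $M_{1}$ is an infinite-memory state. Hence $M_{3}\notin\mathcal{F}_{0}$, i.e., $M_{3}$ is infinite-memory.

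Next I would simply compute the product of the two infinite-memory states $M_{3}$ and $M_{1}$. Using $M_{1}^{2}=I^{\otimes m}$ together with the fact that global signs are irrelevant in the quotient of the Pauli group by its center, one obtains $M_{3}\cdot M_{1}=M_{1}\cdot M_{2}\cdot M_{1}=M_{2}$ regardless of whether $M_{1}$ and $M_{2}$ commute or anticommute. Since $M_{2}$ is a finite-memory state, this displays two infinite-memory states whose product is \emph{not} infinite-memory, so the set is not closed and therefore not a subgroup. To corroborate the accompanying remark that $M_{3}$ ``eventually enters the same loop as $M_{1}$,'' I would reuse the homomorphism property of standard paths already exploited in Fact~\ref{fact:finite-mem-sub}: running the standard path from $M_{3}$ with ancilla inputs chosen as the row-by-row product of the ancilla sequences for the $M_{1}$- and $M_{2}$-paths (this product stays in $\left\{ I,Z\right\} ^{\otimes(n-k)}$ because that alphabet is closed under multiplication modulo phase), the memory state in each row factors as the product of the corresponding $M_{1}$- and $M_{2}$-path states; the $M_{2}$-component reaches $I^{\otimes m}$ after finitely many transitions, after which the $M_{3}$-path coincides with the $M_{1}$-path and hence enters its loop.

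The one genuinely load-bearing step—and the one I would be most careful about—is the first: that multiplying an infinite-memory state by a finite-memory state cannot yield a finite-memory state. Everything else is formal manipulation once the subgroup property of $\mathcal{F}_{0}$ and the involution property $M^{2}=I^{\otimes m}$ of Pauli operators are in hand. Indeed, even the degenerate choice $M_{2}=I^{\otimes m}$, noting $I^{\otimes m}\in\mathcal{F}_{0}$, already gives $M_{1}\cdot M_{1}=I^{\otimes m}$ and thereby breaks closure, so the essential obstruction is robust.
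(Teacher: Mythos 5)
Your proposal is correct and takes essentially the same route as the paper: the identical counterexample $M_{3}\equiv M_{1}\cdot M_{2}$ with $M_{3}\cdot M_{1}=M_{2}$, where you merely fill in details the paper asserts inline (the contradiction via the subgroup property of $\mathcal{F}_{0}$ showing $M_{3}\notin\mathcal{F}_{0}$, and the path-multiplication argument for why $M_{3}$ enters the same loop as $M_{1}$). Your closing observation that $M_{1}\cdot M_{1}=I^{\otimes m}$ already breaks closure is a valid, even more economical version of the same obstruction.
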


\begin{fact}
[Infinite-memory subgroup]If we take the quotient of the full Pauli group
acting on $m$ qubits by the finite-memory subgroup, we obtain a set of
infinite-memory states that do obey the group property. That is, if $M_{1}$
and $M_{2}$ are infinite-memory states in this subgroup, then every standard
path originating from $M_{1}\cdot M_{2}$ is constructed by multiplying every
standard path originating from $M_{1}$ and $M_{2}$. The resulting paths will
never end in the identity memory state no matter which operators in $\left\{
I,Z\right\}  ^{\otimes\left(  n-k\right)  }$ are input for the ancilla qubits.
Let $\mathcal{I}_{0}$ denote this subgroup (for simplicity, we will refer to
it as the infinite-memory subgroup, and our convention is that the identity
memory state is part of this subgroup). Thus, all memory states in this
subgroup are infinite-memory and we obtain only infinite-memory states when
multiplying them together.
\end{fact}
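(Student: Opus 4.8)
The plan is to exploit the fact that, modulo global phases, the Pauli group $\Pi^{\otimes m}$ on the $m$ memory qubits is abelian and is naturally an $\mathbb{F}_{2}$-vector space of dimension $2m$, where each memory state is identified with its symplectic $(X\,|\,Z)$ representation. Over $\mathbb{F}_{2}$ scalar multiplication is trivial, so a subgroup is the same thing as a subspace; in particular the finite-memory subgroup $\mathcal{F}_{0}$ of Fact~\ref{fact:finite-mem-sub} is an $\mathbb{F}_{2}$-subspace. Since every subgroup of an abelian group is normal, the quotient $\Pi^{\otimes m}/\mathcal{F}_{0}$ is automatically a well-defined group, and the only real task is to realize it as an honest subgroup of $\Pi^{\otimes m}$ whose nonidentity elements are infinite-memory states.

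First I would choose a complementary subspace $\mathcal{I}_{0}$, so that $\Pi^{\otimes m}=\mathcal{F}_{0}\oplus\mathcal{I}_{0}$; such a complement always exists because $\mathcal{F}_{0}$ is a subspace of a finite-dimensional space (extend a basis). By construction $\mathcal{I}_{0}$ is closed under multiplication, it contains the identity memory state (every subspace contains $0$, which matches the stated convention), and $\mathcal{I}_{0}\cap\mathcal{F}_{0}=\{I^{\otimes m}\}$, so its cosets coincide with those of $\mathcal{F}_{0}$ and $\mathcal{I}_{0}\cong\Pi^{\otimes m}/\mathcal{F}_{0}$. Next I would check that every element of $\mathcal{I}_{0}\setminus\{I^{\otimes m}\}$ is infinite-memory: a memory state is either finite-memory (hence in $\mathcal{F}_{0}$) or infinite-memory, and since $\mathcal{I}_{0}$ meets $\mathcal{F}_{0}$ only in the identity, any nonidentity $M\in\mathcal{I}_{0}$ cannot be finite-memory. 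Closure of $\mathcal{I}_{0}$ then yields the group property in the statement: for $M_{1},M_{2}\in\mathcal{I}_{0}$ the product $M_{1}\cdot M_{2}$ again lies in $\mathcal{I}_{0}$, so it is either the identity or infinite-memory, and no standard path out of it ever terminates at the identity memory state. The ``multiplication of standard paths'' is then just the linearity of the encoder: Clifford conjugation is a homomorphism on Pauli operators, so feeding $(M_{1}\cdot M_{2},\,S^{(1)}S^{(2)},\,I^{\otimes k})$ through the encoder produces the product of the outputs of $(M_{1},S^{(1)},I^{\otimes k})$ and $(M_{2},S^{(2)},I^{\otimes k})$, and because $\{I,Z\}^{\otimes(n-k)}$ is closed under multiplication the combined ancilla inputs stay in $\{I,Z\}^{\otimes(n-k)}$ while the logical inputs stay at $I^{\otimes k}$.

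The main obstacle here is conceptual rather than computational, and it is precisely the phenomenon flagged by the preceding Fact on the infinite-memory set: a product of two infinite-memory states need not be infinite-memory, since the output memory states $M_{1}^{\prime}$ and $M_{2}^{\prime}$ along the two paths can multiply to the identity even when neither is the identity. Consequently the ``never reaches the identity'' conclusion cannot be read off from the two individual paths. What rescues the argument is that this conclusion is forced \emph{globally} by the choice of $\mathcal{I}_{0}$ as a linear complement: membership $M_{1}\cdot M_{2}\in\mathcal{I}_{0}$, and hence infinite-memory-ness, is guaranteed by subspace closure, not by any local behavior of the paths. I would therefore take care to present the path multiplication as a consequence of, and as consistent with, the subgroup structure, rather than as the source of the infinite-memory property, so that the logical order of the argument is not inverted.
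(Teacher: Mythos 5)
Your proposal is correct and follows essentially the same route as the paper: the paper's own justification is the paragraph immediately following the Fact, which completes a generating set $\left\{ T_{1},\ldots,T_{l}\right\}$ of $\mathcal{F}_{0}$ to a generating set of the full Pauli group on the $m$ memory qubits and takes the remaining generators $\left\{ U_{l+1},\ldots,U_{2m}\right\}$ to generate $\mathcal{I}_{0}$ --- precisely your choice of a complementary $\mathbb{F}_{2}$-subspace. Your extra care in deriving infinite-memory-ness globally from $\mathcal{I}_{0}\cap\mathcal{F}_{0}=\{I^{\otimes m}\}$ rather than from the local path multiplication is consistent with, and slightly more explicit than, the paper's presentation.
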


One can easily construct a generating set for the infinite-memory subgroup
explicitly if a generating set $\left\{ T_{1},\ldots,T_{l}\right\} $
for the finite-memory subgroup is available. Indeed, we would just need to
find $2m-l$ independent generators $\left\{ U_{l+1},\ldots,U_{2m}%
\right\} $ such that $\left\{ T_{1},\ldots,T_{l}, U_{l+1}, \ldots,
U_{2m} \right\} $
constitutes a generating set for the Pauli group on the $m$ memory qubits. The
generating set $\left\{ U_{l+1},\ldots,U_{2m}\right\} $ then
generates the infinite-memory subgroup mentioned above.

Observe that the finite-memory subgroup $\mathcal{F}_{0}$ and the
infinite-memory subgroup $\mathcal{I}_{0}$\ are independent subgroups, and
together they constitute the full Pauli group acting on $m$ qubits.

\begin{lemma}
\label{lemma1} The memory states in a zero physical weight cycle commute with
each element of the finite-memory subgroup $\mathcal{F}_{0}$.
\end{lemma}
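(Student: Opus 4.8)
The plan is to exploit the single property that the encoder is a Clifford transformation, so it preserves the commutation relations between any two Pauli operators as they pass from the input registers (memory, ancilla, info) to the output registers (physical, memory). The entire argument then reduces to tracking one bit of information---whether two operators commute---through the encoder, step by step.

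First I would fix an arbitrary memory state $M$ lying on a zero physical-weight cycle together with an arbitrary element $F\in\mathcal{F}_{0}$, and run the two relevant trajectories in lockstep. On one track I run the zero physical-weight cycle of (\ref{eq:physical-weight-cycle}) starting from $M$, looping through it as many times as needed; denote its input memory state at step $i$ by $C_{i}$ (so $C_{0}=M$), its ancilla input by $S_{i}^{C}\in\{I,Z\}^{\otimes(n-k)}$, its logical input by $L_{i}^{C}$, and recall that its physical output is always $I^{\otimes n}$. On the other track I run the finite standard path of (\ref{eq:finite-standard-path}) starting from $F$; denote its input memory state at step $i$ by $D_{i}$ (so $D_{0}=F$), its ancilla input by $S_{i}^{D}\in\{I,Z\}^{\otimes(n-k)}$, its logical input $I^{\otimes k}$, and its physical output by $P_{i+1}^{D}$. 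By definition of a finite standard path, $D_{t}=I^{\otimes m}$ for some finite $t$, after which the path idles at the identity memory state under identity inputs.

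Next I would compare the input and output commutation relations at each step $i$. Because the encoder is Clifford, the full input string $(C_{i}\mid S_{i}^{C}\mid L_{i}^{C})$ commutes with $(D_{i}\mid S_{i}^{D}\mid I^{\otimes k})$ if and only if the corresponding output strings $(I^{\otimes n}\mid C_{i+1})$ and $(P_{i+1}^{D}\mid D_{i+1})$ commute. The commutation relation of a pair of tensor-product Pauli operators is the sum over the tensor factors (mod $2$) of the individual factor-wise relations, and here the cross terms vanish for three separate reasons: the ancilla contribution is trivial because every pair of operators drawn from $\{I,Z\}^{\otimes(n-k)}$ commutes; the information contribution is trivial because one of the two logical inputs is $I^{\otimes k}$; and on the output side the physical contribution is trivial because the cycle emits $I^{\otimes n}$. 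Consequently the input relation reduces to the commutation relation between $C_{i}$ and $D_{i}$ alone, and the output relation reduces to that between $C_{i+1}$ and $D_{i+1}$ alone, yielding the invariance that $C_{i}$ commutes with $D_{i}$ exactly when $C_{i+1}$ commutes with $D_{i+1}$.

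Finally, iterating this invariance from $i=0$ up to $i=t$ shows that $C_{0}=M$ commutes with $D_{0}=F$ if and only if $C_{t}$ commutes with $D_{t}=I^{\otimes m}$; since every operator commutes with the identity, both sides hold, and hence $M$ commutes with $F$. As $M$ and $F$ were arbitrary, the lemma follows. The main obstacle I anticipate is purely bookkeeping: one must run the finite and cyclic trajectories in genuine lockstep for at least $t$ steps---looping the cycle as necessary and idling the finite path at the identity once it arrives---and one must check carefully that each of the three cross-contributions really does vanish at every step, so that the single-bit commutation relation between the memory operators is the only quantity that propagates through the encoder.
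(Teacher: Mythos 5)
Your proposal is correct and takes essentially the same approach as the paper: both use the Clifford property of the encoder to propagate the commutation relation between the rows of the zero physical-weight cycle and the rows of the finite standard path, noting that the ancilla, information, and physical cross-contributions all vanish, and then conclude from the fact that the finite standard path terminates in the identity memory state. The only cosmetic difference is that you run the two paths forward in lockstep, whereas the paper inducts backward from the last row of the finite standard path.
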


\begin{proof}
The proof of this lemma is similar to the proof of Theorem~7 in
Ref.~\cite{arXiv:1105.0649v1}. For completeness, we repeat it here. Consider
that the RHS\ of the last row of (\ref{eq:finite-standard-path}) commutes with
the RHS\ of any of the rows of (\ref{eq:physical-weight-cycle}). This then
implies that the LHS\ of the last row of (\ref{eq:finite-standard-path})
commutes with the LHS\ of any of the rows of (\ref{eq:physical-weight-cycle}).
Since the $S_{i}$ operators already commute, this implies that $M_{1}%
,\ldots,M_{p}$ in (\ref{eq:physical-weight-cycle}) commute with $M_{t-1}$ in
(\ref{eq:finite-standard-path}). This in turn implies that the RHS\ of the
second-to-last row of (\ref{eq:finite-standard-path}) commutes with the
RHS\ of any row of (\ref{eq:physical-weight-cycle}), and furthermore, that the
LHS\ of the second-to-last row of (\ref{eq:finite-standard-path}) commutes
with the LHS\ of any row of (\ref{eq:physical-weight-cycle}). So we have that
$M_{1},\ldots,M_{p}$ in (\ref{eq:physical-weight-cycle}) commutes with
$M_{t-2}$ in (\ref{eq:finite-standard-path}). Continuing this argument
inductively, we can conclude the statement of the lemma.
\end{proof}

Let $C\left(  \mathcal{F}_{0}\right)  $ denote the subgroup consisting of
memory states that commute with each element of the finite-memory subgroup
$\mathcal{F}_{0}$ (so that $C\left(  \mathcal{F}_{0}\right)  $ is the
\textit{centralizer} of $\mathcal{F}_{0}$). The above lemma states that the
zero physical-weight cycle subgroup is a subset of the centralizer of the
finite-memory subgroup:\ $\mathcal{P}_{0}\subseteq C\left(  \mathcal{F}%
_{0}\right)  $. The rest of our proof is dedicated to proving the other
containment:\ $C\left(  \mathcal{F}_{0}\right)  \subseteq\mathcal{P}_{0}$,
implying that the zero physical-weight cycle subgroup $\mathcal{P}_{0}$ is
equivalent to the centralizer of the finite-memory subgroup:\ $\mathcal{P}%
_{0}=C\left(  \mathcal{F}_{0}\right)  $. Having this will then allow us to
easily prove that all non-catastrophic encoders are non-recursive.

Now consider the following transformation corresponding to the response of the
encoder to a Pauli-$Z$ operator on one of the ancilla qubits:%
\begin{equation}%
\begin{tabular}
[c]{c|c|c}%
Mem. & Anc. & Info.\\\hline\hline
$I^{\otimes m}$ & $Z(i)$ & $I^{\otimes k}$%
\end{tabular}
\ \ \ \ \rightarrow\ \
\begin{tabular}
[c]{c|c}%
Phys. & Mem.\\\hline\hline
$h_{i,1}$ & $g_{i,1}$%
\end{tabular}
\ \ \ \ , \label{eq:ancilla-path1}%
\end{equation}
where $i\in\left\{  1,\ldots,n-k\right\}  $. The encoder acts on a finite
number of qubits, so it must either end in the identity memory state or cycle
after following a path from $g_{i,1}$ in which the identity operator is always
input for both the ancilla and information qubits. Thus, we either have
\[%
\begin{tabular}
[c]{c|c|c}%
Mem. & Anc. & Info.\\\hline\hline
$I^{\otimes m}$ & $Z(i)$ & $I^{\otimes k}$\\
$g_{i,1}$ & $I^{\otimes\left(  n-k\right)  }$ & $I^{\otimes k}$\\
$g_{i,2}$ & $I^{\otimes\left(  n-k\right)  }$ & $I^{\otimes k}$\\
$\vdots$ & $\vdots$ & $\vdots$\\
$g_{i,d_{i}-1}$ & $I^{\otimes\left(  n-k\right)  }$ & $I^{\otimes k}$%
\end{tabular}
\ \ \rightarrow%
\begin{tabular}
[c]{c|c}%
Phys. & Mem.\\\hline\hline
$h_{i,1}$ & $g_{i,1}$\\
$h_{i,2}$ & $g_{i,2}$\\
$h_{i,3}$ & $g_{i,3}$\\
$\vdots$ & $\vdots$\\
$h_{i,d_{i}}$ & $I^{\otimes m}$%
\end{tabular}
\ ,
\]
in which case the input $I^{\otimes m}\,|\,Z(i)\,|\,I^{\otimes k}$ leads to a
finite standard path, or we have%
\[%
\begin{tabular}
[c]{c|c|c}%
Mem. & Anc. & Info.\\\hline\hline
$I^{\otimes m}$ & $Z(i)$ & $I^{\otimes k}$\\
$g_{i,1}$ & $I^{\otimes n-k}$ & $I^{\otimes k}$\\
$\vdots$ & $\vdots$ & $\vdots$\\
$g_{i,j-1}$ & $I^{\otimes n-k}$ & $I^{\otimes k}$\\
$\vdots$ & $\vdots$ & $\vdots$\\
$g_{i,t}$ & $I^{\otimes n-k}$ & $I^{\otimes k}$\\
$g_{i,j}$ & $I^{\otimes n-k}$ & $I^{\otimes k}$\\
$\vdots$ & $\vdots$ & $\vdots$\\
$g_{i,t}$ & $I^{\otimes n-k}$ & $I^{\otimes k}$\\
$\vdots$ & $\vdots$ & $\vdots$%
\end{tabular}
\ \ \ \rightarrow\ \
\begin{tabular}
[c]{c|c}%
Phys. & Mem.\\\hline\hline
$h_{i,1}$ & $g_{i,1}$\\
$h_{i,2}$ & $g_{i,2}$\\
$\vdots$ & $\vdots$\\
$h_{i,j}$ & $g_{i,j}$\\
$\vdots$ & $\vdots$\\
$h_{i,t+1}$ & $g_{i,j}$\\
$h_{i,j+1}$ & $g_{i,j+1}$\\
$\vdots$ & $\vdots$\\
$h_{i,t+1}$ & $g_{i,j}$\\
$\vdots$ & $\vdots$%
\end{tabular}
\ ,
\]
so that the encoder enters some cycle. In this latter case, the following
input sequence drives the encoder back to the identity memory state:%
\[%
\begin{tabular}
[c]{c|c|c}%
Mem. & Anc. & Info.\\\hline\hline
$I^{\otimes m}$ & $Z(i)$ & $I^{\otimes k}$\\
$g_{i,1}$ & $I^{\otimes n-k}$ & $I^{\otimes k}$\\
$\vdots$ & $\vdots$ & $\vdots$\\
$g_{i,j-1}$ & $I^{\otimes n-k}$ & $I^{\otimes k}$\\
$g_{i,j}$ & $I^{\otimes n-k}$ & $I^{\otimes k}$\\
$\vdots$ & $\vdots$ & $\vdots$\\
$g_{i,t-j+1}$ & $Z(i)$ & $I^{\otimes k}$\\
$\vdots$ & $\vdots$ & $\vdots$\\
$g_{i,t-1}\cdot g_{i,j-2}$ & $I^{\otimes n-k}$ & $I^{\otimes k}$\\
$g_{i,t}\cdot g_{i,j-1}$ & $I^{\otimes n-k}$ & $I^{\otimes k}$%
\end{tabular}
\ \ \ \rightarrow\ \
\begin{tabular}
[c]{c|c}%
Phys. & Mem.\\\hline\hline
$h_{i,1}$ & $g_{i,1}$\\
$h_{i,2}$ & $g_{i,2}$\\
$\vdots$ & $\vdots$\\
$h_{i,j}$ & $g_{i,j}$\\
$h_{i,j+1}$ & $g_{i,j+1}$\\
$\vdots$ & $\vdots$\\
$h_{i,t-j+2}$ & $g_{i,t-j+2}\cdot g_{i,1}$\\
$\vdots$ & $\vdots$\\
$h_{i,t}$ & $g_{i,t}\cdot g_{i,j-1}$\\
$h_{i,t+1}$ & $g_{i,j}\cdot g_{i,j}=I^{\otimes m}$%
\end{tabular}
\ \ \ .
\]
So the memory states resulting from the input $I^{\otimes m}%
\,|\,Z(i)\,|\,I^{\otimes k}$ are always part of the finite-memory subgroup
$\mathcal{F}_{0}$, and based on Lemma \ref{lemma1}, these memory states
commute with the memory states in any zero-physical-weight cycle.

\begin{lemma}
\label{lemma 4} Suppose that a memory state $M^{\prime}\in C\left(
\mathcal{F}_{0}\right)  $ (that is, it commutes with each element of
$\mathcal{F}_{0}$). Then there is a unique memory state $M\in C\left(
\mathcal{F}_{0}\right)  $, an $S\in\{I,Z\}^{\otimes(n-k)}$, and an
$L\in\left\{  I,X,Y,Z\right\}  ^{\otimes k}$ such that the encoder transforms
$M\ |\ S\ |\ L$ as follows:%
\begin{equation}
M\ |\ S\ |\ L\rightarrow I^{\otimes n}\ |\ M^{\prime}. \label{endpoint}%
\end{equation}

\end{lemma}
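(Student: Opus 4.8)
The statement asserts that the centralizer $C(\mathcal{F}_0)$ is \emph{closed backwards} under encoder transitions: any $M' \in C(\mathcal{F}_0)$ has a preimage $M \in C(\mathcal{F}_0)$ under some transition with physical label $I^{\otimes n}$, with the ancilla input drawn from $\{I,Z\}^{\otimes(n-k)}$ and \emph{some} logical input $L$. The plan is to exploit the fact that the encoder $U$ is a Clifford transformation, hence a \emph{bijection} on the Pauli group. I would first run the transition backward through $U^{-1}$ to produce a candidate input triple $M \,|\, S \,|\, L$ yielding output $I^{\otimes n} \,|\, M'$, then argue separately that (i) the ancilla part $S$ can be taken in $\{I,Z\}^{\otimes(n-k)}$, and (ii) the resulting $M$ again lies in $C(\mathcal{F}_0)$, and finally that $M$ is uniquely determined.

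**Constructing the preimage.** Since $U$ is Clifford, the map sending an input operator on (memory, ancilla, info) qubits to an output operator on (physical, memory) qubits is a group isomorphism. Applying $U^{-1}$ to the desired output $I^{\otimes n} \,|\, M'$ produces a unique input $M \,|\, S \,|\, L$ with $M$ an $m$-qubit Pauli, $S$ an $(n-k)$-qubit Pauli on the ancillas, and $L$ a $k$-qubit Pauli on the information qubits; this is exactly the relation~\eqref{endpoint}. The physical output being trivial is what forces the transition to be part of a zero physical-weight structure, so this $M$ is a legitimate candidate for the memory state we seek. Uniqueness of $(M,S,L)$ is immediate from the invertibility of $U$; the only subtlety is that the \emph{statement} restricts $S$ to $\{I,Z\}^{\otimes(n-k)}$, whereas an arbitrary $U^{-1}$ image could a priori place an $X$ or $Y$ on an ancilla qubit.

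**The two obstacles: the ancilla constraint and membership in $C(\mathcal{F}_0)$.** The main difficulty is establishing that $S \in \{I,Z\}^{\otimes(n-k)}$ and that $M \in C(\mathcal{F}_0)$. For the ancilla constraint, I would use the defining input--output relations~\eqref{eq:general-encoder} of the encoder together with the commutation structure: an unencoded $X$ or $Y$ on an ancilla qubit is exactly what would produce nontrivial physical weight (it is the conjugate partner of the $Z(i)$ that maps to a stabilizer generator $h_i$), so the requirement of zero physical output $I^{\otimes n}$ is what pins $S$ down to the $\{I,Z\}$ block. Concretely, I would argue that the component of $S$ anticommuting with some $Z(i)$ would force the physical label to carry the nontrivial weight of $h_i$, contradicting $P = I^{\otimes n}$. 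For membership of $M$ in $C(\mathcal{F}_0)$, I would run a commutation argument paralleling the proof of Lemma~\ref{lemma1}: given any finite-memory state $T \in \mathcal{F}_0$, trace its finite standard path and use the fact that $M'$ commutes with $T$ (by hypothesis $M' \in C(\mathcal{F}_0)$) together with the preservation of commutation relations under the Clifford $U$, pulling the commutation relation backward row by row through~\eqref{endpoint} just as Lemma~\ref{lemma1} pushes it forward. The earlier computation in the excerpt—showing that the memory states arising from a $Z(i)$ ancilla input always lie in $\mathcal{F}_0$—is the fact I expect to reuse here to handle how $S$ interacts with the finite-memory generators.

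**Expected main obstacle.** I anticipate the hardest step is (ii), verifying $M \in C(\mathcal{F}_0)$ rather than merely exhibiting some preimage. Invertibility of $U$ hands us $(M,S,L)$ for free, and the ancilla restriction follows from weight bookkeeping, but confirming that $M$ commutes with \emph{every} element of $\mathcal{F}_0$ requires propagating the hypothesis $M' \in C(\mathcal{F}_0)$ backward through the transition while controlling the effect of the (possibly nontrivial) logical input $L$ and the ancilla input $S$. The key tool is that $U$ preserves all commutation relations, so commutation of the output operators $I^{\otimes n} \,|\, M'$ with the image of any finite standard path translates into commutation of the input $M \,|\, S \,|\, L$ with the corresponding input operators; since the $S_i$ and $L$-type operators already have controlled commutation behavior against $\mathcal{F}_0$, the burden falls onto the $M$ component, yielding $M \in C(\mathcal{F}_0)$.
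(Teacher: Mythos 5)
Your overall route is the same as the paper's: apply the inverse of the Clifford encoder to $I^{\otimes n}\,|\,M'$ to obtain a unique input triple $M\,|\,S\,|\,L$, then use preservation of commutation relations to force $S\in\{I,Z\}^{\otimes(n-k)}$ and $M\in C\left(\mathcal{F}_{0}\right)$, the latter by the same backward, row-by-row propagation used in Lemma~\ref{lemma1} against finite standard paths of the form (\ref{eq:finite-standard-path}). The existence/uniqueness step and the argument that $M\in C\left(\mathcal{F}_{0}\right)$ match the paper's proof.

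The one genuine flaw is in your justification of the ancilla constraint. You claim that a component of $S$ anticommuting with some $Z(i)$ ``would force the physical label to carry the nontrivial weight of $h_{i}$, contradicting $P=I^{\otimes n}$,'' and summarize this as weight bookkeeping. That inference does not go through: comparing (\ref{endpoint}) with the ancilla transition (\ref{eq:ancilla-path1}), the commutator of the inputs is governed entirely by $[S,Z(i)]$ (the memory and information slots of (\ref{eq:ancilla-path1}) are identities), and a resulting anticommutation of the outputs $I^{\otimes n}\,|\,M'$ and $h_{i,1}\,|\,g_{i,1}$ could a priori be absorbed entirely by $M'$ anticommuting with $g_{i,1}$ --- the physical slot of (\ref{endpoint}) is already fixed to $I^{\otimes n}$ by construction, so no contradiction with zero physical weight arises. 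The actual contradiction comes from the hypothesis $M'\in C\left(\mathcal{F}_{0}\right)$ combined with the fact, established in the discussion preceding the lemma, that $g_{i,1}\in\mathcal{F}_{0}$: together these force $[M',g_{i,1}]=0$, hence $[S,Z(i)]=0$ for every $i$, hence $S\in\{I,Z\}^{\otimes(n-k)}$. You do mention the fact $g_{i,1}\in\mathcal{F}_{0}$ as something to reuse, so the repair is short, but as written the step would fail, and it is precisely here --- not only in showing $M\in C\left(\mathcal{F}_{0}\right)$ --- that the hypothesis on $M'$ does essential work.
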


\begin{proof}
First, the transition in (\ref{endpoint}) is part of the encoder because we
can always act with the inverse of the encoder on $I^{\otimes n}%
\ |\ M^{\prime}$, producing some unique $M\ |\ S\ |\ L$ as the input
operators. Then, by assumption, $M^{\prime}\in C\left(  \mathcal{F}%
_{0}\right)  $, implying that the RHS of (\ref{endpoint}) commutes with the
RHS of (\ref{eq:ancilla-path1}) for all $i\in\left\{  1,\ldots,n-k\right\}  $
because $g_{i,1}\in\mathcal{F}_{0}$. Thus, the LHS of (\ref{endpoint}) should
commute with the LHS of (\ref{eq:ancilla-path1}) for all $i\in\left\{
1,\ldots,n-k\right\}  $. This implies that $S\in\{I,Z\}^{\otimes(n-k)}$. Also,
$M$ commutes with all states in $\mathcal{F}_{0}$ ($M\in C\left(
\mathcal{F}_{0}\right)  $) for the same reason that the zero physical weight
cycles from Lemma~\ref{lemma1}\ commute with the finite-memory subgroup
$\mathcal{F}_{0}$. This is due to the particular form in (\ref{endpoint}), the
need for input-output commutativity consistency between (\ref{endpoint}) and
(\ref{eq:finite-standard-path}), and the fact that $M^{\prime}\in C\left(
\mathcal{F}_{0}\right)  $.
\end{proof}

\begin{lemma}
\label{lem:start-to-end-edge} Suppose that a memory state $M\in C\left(
\mathcal{F}_{0}\right)  $ (that is, it commutes with each element of
$\mathcal{F}_{0}$). Then there is a unique memory state $M^{\prime}\in
C\left(  \mathcal{F}_{0}\right)  $, an $S\in\{I,Z\}^{\otimes(n-k)}$, and an
$L\in\left\{  I,X,Y,Z\right\}  ^{\otimes k}$ such that the encoder transforms
$M\ |\ S\ |\ L$ as follows:%
\begin{equation}
M\ |\ S\ |\ L\rightarrow I^{\otimes n}\ |\ M^{\prime}.
\label{eq:constructed-edge}%
\end{equation}

\end{lemma}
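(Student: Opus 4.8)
The plan is to deduce this forward statement from its backward counterpart, Lemma~\ref{lemma 4}, by exploiting the linearity of the encoder. Since the encoder is a Clifford transformation, the map it induces on Pauli operators is linear over $\mathbb{F}_{2}$: if $M_{1}\,|\,S_{1}\,|\,L_{1}\rightarrow P_{1}\,|\,M_{1}^{\prime}$ and $M_{2}\,|\,S_{2}\,|\,L_{2}\rightarrow P_{2}\,|\,M_{2}^{\prime}$ are transitions of the encoder, then so is $(M_{1}M_{2})\,|\,(S_{1}S_{2})\,|\,(L_{1}L_{2})\rightarrow(P_{1}P_{2})\,|\,(M_{1}^{\prime}M_{2}^{\prime})$. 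In particular, the zero physical-weight transitions whose endpoints lie in $C(\mathcal{F}_{0})$ are closed under this multiplication. Lemma~\ref{lemma 4} attaches to each $M^{\prime}\in C(\mathcal{F}_{0})$ a \emph{unique} triple $(M,S,L)$ with $M\,|\,S\,|\,L\rightarrow I^{\otimes n}\,|\,M^{\prime}$, so it defines a map $\psi:C(\mathcal{F}_{0})\rightarrow C(\mathcal{F}_{0})$, $\psi(M^{\prime})=M$; by the uniqueness clause of Lemma~\ref{lemma 4} together with the linearity just noted, $\psi$ is a group homomorphism.

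I would next observe that the present lemma is precisely the assertion that $\psi$ is a bijection. Indeed, existence of the outgoing edge $M\,|\,S\,|\,L\rightarrow I^{\otimes n}\,|\,M^{\prime}$ for a given $M$ is the statement that $M$ lies in the image of $\psi$, while its uniqueness is the injectivity of $\psi$. Because $C(\mathcal{F}_{0})$ is a finite group, it suffices to establish injectivity; surjectivity, and hence existence, then follows automatically. The accompanying facts that the constructed edge has $S\in\{I,Z\}^{\otimes(n-k)}$ and that its endpoint satisfies $M^{\prime}\in C(\mathcal{F}_{0})$ are read off exactly as in the proof of Lemma~\ref{lemma 4}: once the edge is in hand, demanding commutativity of its right-hand side with the ancilla transition (\ref{eq:ancilla-path1}) forces $S\in\{I,Z\}^{\otimes(n-k)}$, and commutativity with the finite standard paths (\ref{eq:finite-standard-path}) places $M^{\prime}$ in $C(\mathcal{F}_{0})$.

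Injectivity amounts to showing that $\psi$ has trivial kernel, i.e., that the only zero physical-weight transition $I^{\otimes m}\,|\,S\,|\,L\rightarrow I^{\otimes n}\,|\,M^{\prime}$ with $S\in\{I,Z\}^{\otimes(n-k)}$ and $M^{\prime}\in C(\mathcal{F}_{0})$ is the trivial one, with $S=I^{\otimes(n-k)}$, $L=I^{\otimes k}$, and $M^{\prime}=I^{\otimes m}$. This is the step I expect to be the main obstacle, and it is exactly the place where the centralizer hypothesis must do real work, since a dimension count on the input side does not by itself eliminate a nontrivial kernel. My approach would be to show that such an $M^{\prime}$ is simultaneously a finite-memory state and an element of $C(\mathcal{F}_{0})$, and then to conclude triviality from the decomposition of the full Pauli group into the independent subgroups $\mathcal{F}_{0}$ and $\mathcal{I}_{0}$ together with the symplectic (double-centralizer) duality for the Pauli group. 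When the information input $L$ is the identity this is clean: the transition arises from a purely memory-trivial input, so inverting its ancilla portion exhibits $M^{\prime}$ as a product of the operators $g_{i,1}$, which were shown in the discussion preceding Lemma~\ref{lemma 4} to lie in $\mathcal{F}_{0}$. The delicate point---and the crux of the whole argument---is to handle a nontrivial information input $L$, cleanly separating its contribution to $M^{\prime}$ from the ancilla contribution so that $M^{\prime}$ can still be located inside $\mathcal{F}_{0}$; here the commutation bookkeeping of Lemma~\ref{lemma1} and the independence of $\mathcal{F}_{0}$ and $\mathcal{I}_{0}$ must be combined with care.
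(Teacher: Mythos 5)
Your reduction of the lemma to the statement that the homomorphism $\psi:C\left(\mathcal{F}_{0}\right)\rightarrow C\left(\mathcal{F}_{0}\right)$, $\psi(M^{\prime})=M$, has trivial kernel is sound in outline: $\psi$ is well defined by Lemma~\ref{lemma 4}, it is a homomorphism by linearity of Clifford conjugation, and injectivity on the finite group $C\left(\mathcal{F}_{0}\right)$ does yield both existence and uniqueness. The problem is that you have not proved the trivial-kernel claim, and the route you sketch for it does not go through. First, as you yourself concede, your argument that a kernel element $M^{\prime}$ lies in $\mathcal{F}_{0}$ works only when the logical input $L$ is the identity; for nontrivial $L$ the one-step response $I^{\otimes m}\,|\,I^{\otimes(n-k)}\,|\,L\rightarrow h\,|\,g$ in general has $g\notin\mathcal{F}_{0}$ (for a recursive encoder it never lies in $\mathcal{F}_{0}$---that is essentially what recursiveness prevents), so there is no way to ``locate $M^{\prime}$ inside $\mathcal{F}_{0}$'' in that case, and this is exactly the case that matters for the main theorem. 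Second, even where your argument does apply, the conclusion is unjustified: knowing $M^{\prime}\in\mathcal{F}_{0}\cap C\left(\mathcal{F}_{0}\right)$ does not force $M^{\prime}=I^{\otimes m}$, because that intersection is the radical of $\mathcal{F}_{0}$ with respect to the symplectic form, and nothing guarantees that $\mathcal{F}_{0}$ is non-degenerate (a $\mathcal{F}_{0}$ generated by a single Pauli operator is contained in its own centralizer). Neither the decomposition of the Pauli group into the independent subgroups $\mathcal{F}_{0}$ and $\mathcal{I}_{0}$ nor double-centralizer duality rules this out.

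The missing ingredient is the one the paper's proof is built on. Take a generating set $\left\{U_{l+1},\ldots,U_{2m}\right\}$ of $\mathcal{I}_{0}$ and consider the one-step transitions $U_{i}\,|\,I^{\otimes(n-k)}\,|\,I^{\otimes k}\rightarrow P_{i}\,|\,T_{i}$. The resulting $T_{i}$ must remain independent modulo $\mathcal{F}_{0}$ (otherwise some product of the $U_{i}$ would reach a finite-memory state in one step and hence be finite-memory itself, a contradiction), so $\left\{T_{1},\ldots,T_{l},T_{l+1},\ldots,T_{2m}\right\}$ generates the full Pauli group on the $m$ memory qubits. Now if $I^{\otimes m}\,|\,S\,|\,L\rightarrow I^{\otimes n}\,|\,M^{\prime}$ with $S\in\{I,Z\}^{\otimes(n-k)}$, the left-hand side commutes with every $U_{i}\,|\,I^{\otimes(n-k)}\,|\,I^{\otimes k}$, hence $M^{\prime}$ commutes with every $T_{i}$ for $i>l$; combined with $M^{\prime}\in C\left(\mathcal{F}_{0}\right)$ this makes $M^{\prime}$ commute with a full generating set, forcing $M^{\prime}=I^{\otimes m}$. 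With this step supplied, your injectivity-plus-counting argument does close the lemma and is a legitimate repackaging of the paper's explicit construction of $M^{\prime}$ from the commutation data of $M$; without it, the crux of the proof is absent.
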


\begin{proof}
Let $\left\{  T_{1},\ldots,T_{l}\right\}  $ be a generating set for the
finite-memory subgroup $\mathcal{F}_{0}$ and let $\left\{  U_{l+1}%
,\ldots,U_{2m}\right\}  $ be a generating set for the infinite-memory subgroup
$\mathcal{I}_{0}$. For every $i\in\left\{  l+1,\ldots,2m\right\}  $, there
exists a $P_{i}$ and $T_{i}$ such that%
\begin{equation}
U_{i}\ |\ I^{\otimes\left(  n-k\right)  }\ |\ I^{\otimes k}\rightarrow
P_{i}\ |\ T_{i}. \label{eq:inf-mem-trans}%
\end{equation}
Furthermore, the $T_{i}$ operators form a generating set $\left\{
T_{l+1},\ldots,T_{2m}\right\}  $ for $\mathcal{I}_{0}$. If it were not so
(that these $T_{i}$ operators were not independent), we would be able to take
certain multiplicative combinations of the $U_{i}$ operators at the input and
construct a path to the identity memory state, contradicting the fact that the
$U_{i}$ operators form a generating set for elements of the infinite-memory
subgroup $\mathcal{I}_{0}$. Thus, $\left\{  T_{l+1},\ldots,T_{2m}\right\}  $
is a generating set for $\mathcal{I}_{0}$ as well.

Now, we fix $M^{\prime}$ to be the unique Pauli operator on $m$ qubits that
commutes with each element of $\left\{  T_{1},\ldots,T_{l}\right\}  $ and such
that its commutation relation with each $T_{i}$ in $\left\{  T_{l+1}%
,\ldots,T_{2m}\right\}  $ is the same as the commutation relation of $M$ with
the corresponding $U_{i}$ in $\left\{  U_{l+1},\ldots,U_{2m}\right\}  $ (the
correspondence between $U_{i}$ and $T_{i}$ being set by
(\ref{eq:inf-mem-trans})). The uniqueness of $M^{\prime}$ follows from the
fact that specifying the commutation relations of an $m$-qubit Pauli operator
with each element of a generating set for the Pauli group on $m$ qubits
completely specifies that operator up to an irrelevant global phase.
Additionally, the above choice for $M^{\prime}$ implies that $M^{\prime}\in
C\left(  \mathcal{F}_{0}\right)  $.

To obtain the conclusion of the lemma, we consider applying the inverse of the
encoder to $I^{\otimes n}\ |\ M^{\prime}$. By Lemma~\ref{lemma 4}, there
exists an $M^{\prime\prime}\in C\left(  \mathcal{F}_{0}\right)  $, an
$S\in\left\{  I,Z\right\}  ^{\otimes\left(  n-k\right)  }$, and an
$L\in\left\{  I,X,Y,Z\right\}  ^{\otimes k}$ such that%
\begin{equation}
M^{\prime\prime}\ |\ S\ |\ L\rightarrow I^{\otimes n}\ |\ M^{\prime}.
\label{eq:final-remark-lemma9}%
\end{equation}
But this $M^{\prime\prime}$ must be equal to $M$ by construction---the encoder
preserves the commutation relations between the RHSs\ of
(\ref{eq:inf-mem-trans}-\ref{eq:final-remark-lemma9}) and the LHSs of
(\ref{eq:inf-mem-trans}-\ref{eq:final-remark-lemma9}) and we already know that
$M^{\prime\prime},M^{\prime}\in C\left(  \mathcal{F}_{0}\right)  $. As we
stated above, these $2m$\ commutation relations completely specify
$M^{\prime\prime}$ and we can thus conclude that $M^{\prime\prime}=M$. This
concludes the proof of the lemma since we have constructed a unique
$M^{\prime}\in C\left(  \mathcal{F}_{0}\right)  $, $S\in\{I,Z\}^{\otimes
(n-k)}$, and $L\in\left\{  I,X,Y,Z\right\}  ^{\otimes k}$ satisfying
(\ref{eq:constructed-edge}) for all $M\in C\left(  \mathcal{F}_{0}\right)  $.
\end{proof}

\begin{remark}
The above proof works just as well by replacing all the transitions in
(\ref{eq:inf-mem-trans}) with the following ones:%
\[
U_{i}\ |\ S^{\prime} \ | \ I^{\otimes k} \rightarrow P_{i}^{\prime}%
\ |\ T_{i}^{\prime},
\]
for some fixed $S^{\prime}\in\left\{  I,Z\right\}  ^{\otimes\left(
n-k\right)  }$.
\end{remark}

\begin{lemma}
\label{lemma 6} Suppose that a memory state $M\in C\left(  \mathcal{F}%
_{0}\right)  $. Then $M$ is a part of a zero physical-weight cycle, such that
all memory states of this cycle are in $C\left(  \mathcal{F}_{0}\right)  $.
\end{lemma}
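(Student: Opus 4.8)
The goal is to show that any $M \in C(\mathcal{F}_0)$ lies on a zero physical-weight cycle consisting entirely of elements of $C(\mathcal{F}_0)$. The plan is to iterate Lemma~\ref{lem:start-to-end-edge}. That lemma takes any $M \in C(\mathcal{F}_0)$ and produces a unique edge

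$$M \ |\ S \ |\ L \rightarrow I^{\otimes n}\ |\ M'$$

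with $M' \in C(\mathcal{F}_0)$, zero physical label, ancilla label $S \in \{I,Z\}^{\otimes(n-k)}$, and some logical label $L$. Starting from the given $M = M^{(0)}$, I would apply the lemma repeatedly to generate a sequence $M^{(0)} \to M^{(1)} \to M^{(2)} \to \cdots$, where each transition is an edge with zero physical weight and each $M^{(j)} \in C(\mathcal{F}_0)$. Every edge in this sequence has the correct form to be a link in a zero physical-weight cycle as displayed in~(\ref{eq:physical-weight-cycle}).

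The key observation that closes this into a cycle is finiteness: $C(\mathcal{F}_0)$ is a subgroup of the Pauli group on $m$ qubits, so it is a finite set. Therefore the infinite sequence $M^{(0)}, M^{(1)}, M^{(2)}, \ldots$ must eventually revisit a memory state, say $M^{(a)} = M^{(b)}$ for some $a < b$. This produces a genuine cycle $M^{(a)} \to M^{(a+1)} \to \cdots \to M^{(b)} = M^{(a)}$, all of whose edges have zero physical weight and all of whose vertices lie in $C(\mathcal{F}_0)$. The uniqueness clause of Lemma~\ref{lem:start-to-end-edge} is what guarantees determinism of the forward map, so that once a state repeats, the whole tail is periodic; this lets me actually extract a clean cycle rather than merely a repeated vertex.

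The remaining point is to argue that the starting state $M = M^{(0)}$ itself lies on such a cycle, not merely on a tail feeding into one. Here I would run the same argument backwards using the uniqueness of the \emph{predecessor}: by Lemma~\ref{lemma 4}, every $M' \in C(\mathcal{F}_0)$ has a unique incoming edge from some $M \in C(\mathcal{F}_0)$ with zero physical weight. Thus the forward map on the finite set $C(\mathcal{F}_0)$ is not only well-defined but injective (each state has a unique successor and a unique predecessor within $C(\mathcal{F}_0)$), hence a bijection on a finite set, hence a permutation. A permutation of a finite set decomposes into disjoint cycles, so \emph{every} element of $C(\mathcal{F}_0)$ lies on one of these zero physical-weight cycles—in particular the given $M$ does. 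This immediately yields the conclusion, with all cycle vertices in $C(\mathcal{F}_0)$ as required.

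I expect the main obstacle to be the last step: establishing that $M$ lies on the cycle rather than on a transient path into it. Invoking the bijection/permutation structure via the two uniqueness lemmas (Lemma~\ref{lemma 4} for predecessors and Lemma~\ref{lem:start-to-end-edge} for successors) is the cleanest way to avoid fussing over transients, and it is the step where the uniqueness clauses—rather than mere existence—do the real work.
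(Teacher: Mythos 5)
Your proposal is correct and follows essentially the same route as the paper's proof: iterate Lemma~\ref{lem:start-to-end-edge} to build a zero physical-weight path within $C\left(\mathcal{F}_{0}\right)$, use finiteness of $C\left(\mathcal{F}_{0}\right)$ to force a repeat, and invoke the uniqueness of the predecessor from Lemma~\ref{lemma 4} to rule out a transient tail so that the cycle closes at $M$ itself. Your packaging of the two uniqueness clauses as ``the successor map is a permutation of the finite set $C\left(\mathcal{F}_{0}\right)$'' is a slightly cleaner way of stating the same argument, not a different one.
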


\begin{proof}
From Lemma~\ref{lem:start-to-end-edge}, we know that there is a unique memory
state $M_{1}\in C\left(  \mathcal{F}_{0}\right)  $, an $S_{1}\in\left\{
I,Z\right\}  ^{\otimes\left(  n-k\right)  }$, and an $L_{1}\in\left\{
I,X,Y,Z\right\}  ^{\otimes k}$ such that%
\[
M\ |\ S_{1}\ |\ L_{1}\rightarrow I^{\otimes n}\ |\ M_{1},
\]
This is also the case for $M_{1}$. That is, there exists a unique memory state
$M_{2}\in C\left(  \mathcal{F}_{0}\right)  $, an $S_{2}\in\left\{
I,Z\right\}  ^{\otimes\left(  n-k\right)  }$, and an $L_{2}\in\left\{
I,X,Y,Z\right\}  ^{\otimes k}$ such that%
\[
M_{1}\ |\ S_{2}\ |\ L_{2}\rightarrow I^{\otimes n}\ |\ M_{2}.
\]
This process cannot go on indefinitely (that of finding a new $M_{i+1}\in
C\left(  \mathcal{F}_{0}\right)  $ that has not yet already appeared, for a
given $M_{i}\in C\left(  \mathcal{F}_{0}\right)  $). The centralizer $C\left(
\mathcal{F}_{0}\right)  $ of the finite-memory subgroup $\mathcal{F}_{0}$ has
a finite number of elements, and at some point, the transition is guaranteed
to return back to $M$. Also, note that these zero physical-weight transitions
cannot go back to any of $M_{1}$ through $M_{j}$ before going back to
$M$---this would contradict Lemma~\ref{lemma 4}. Thus, $M$ is part of a zero
physical-weight cycle of the following form:%
\begin{equation}%
\begin{tabular}
[c]{c|c|c}%
Mem. & Anc. & Info.\\\hline\hline
$M$ & $S_{1}$ & $L_{1}$\\
$M_{1}$ & $S_{2}$ & $L_{2}$\\
$\vdots$ & $\vdots$ & $\vdots$\\
$M_{j-1}$ & $S_{j-1}$ & $L_{j-1}$\\
$M_{j}$ & $S_{j}$ & $L_{j}$%
\end{tabular}
\ \ \ \ \rightarrow\ \
\begin{tabular}
[c]{c|c}%
Phys. & Mem.\\\hline\hline
$I^{\otimes n}$ & $M_{1}$\\
$I^{\otimes n}$ & $M_{2}$\\
$\vdots$ & $\vdots$\\
$I^{\otimes n}$ & $M_{j}$\\
$I^{\otimes n}$ & $M$%
\end{tabular}
\ \ \ \ .
\end{equation}

\end{proof}

\begin{corollary}
\label{corollary2} From Lemma~\ref{lemma 6}, we conclude that all memory
states of $C\left(  \mathcal{F}_{0}\right)  $ are part of some zero
physical-weight cycle, and when combined with Lemma~\ref{lemma1}, we have that
$\mathcal{P}_{0}=C\left(  \mathcal{F}_{0}\right)  $. So if all memory states
of all zero-physical weight cycles of a transformation commute with a given
memory state, that memory state belongs to $\mathcal{F}_{0}$.
\end{corollary}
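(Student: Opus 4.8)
The plan is to assemble the two containments already in hand and then invoke a single structural fact about the Pauli group. First I would record the forward inclusion: Lemma~\ref{lemma1} states that every memory state lying in a zero physical-weight cycle commutes with all of $\mathcal{F}_{0}$, which is exactly the assertion that such a state lies in the centralizer. Hence $\mathcal{P}_{0}\subseteq C\left(\mathcal{F}_{0}\right)$. For the reverse inclusion I would invoke Lemma~\ref{lemma 6}, which produces, for each $M\in C\left(\mathcal{F}_{0}\right)$, an explicit zero physical-weight cycle passing through $M$ (with all of its memory states again in $C\left(\mathcal{F}_{0}\right)$). This certifies $M\in\mathcal{P}_{0}$ and therefore $C\left(\mathcal{F}_{0}\right)\subseteq\mathcal{P}_{0}$. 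Combining the two gives the equality
\[
\mathcal{P}_{0}=C\left(\mathcal{F}_{0}\right),
\]
which is the first assertion of the corollary.

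For the closing sentence I would reason by way of a double-centralizer argument. Suppose a memory state $M$ commutes with every element of every zero physical-weight cycle, i.e., with every element of $\mathcal{P}_{0}$. By the equality just proved, this says $M$ commutes with every element of $C\left(\mathcal{F}_{0}\right)$, so that $M\in C\left(C\left(\mathcal{F}_{0}\right)\right)$. The desired conclusion $M\in\mathcal{F}_{0}$ is then precisely the statement that taking the centralizer twice returns the original subgroup, which I would justify by appealing to the symplectic structure of the Pauli group.

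The key structural fact is the following. Working modulo its center, the Pauli group on $m$ qubits is a $2m$-dimensional vector space over $\mathbb{F}_{2}$, and the commutation relation (commute versus anticommute) furnishes a non-degenerate symplectic form on this space. Under this identification, the centralizer of a subgroup is exactly the symplectic orthogonal complement of the associated subspace, and a dimension count gives $\dim\mathcal{F}_{0}+\dim C\left(\mathcal{F}_{0}\right)=2m$. Because the form is non-degenerate, applying the orthogonal-complement operation twice recovers the original subspace; applied to the subspace associated with $\mathcal{F}_{0}$ (which is a genuine subgroup by Fact~\ref{fact:finite-mem-sub}, so the theorem is applicable), this yields $C\left(C\left(\mathcal{F}_{0}\right)\right)=\mathcal{F}_{0}$ and hence $M\in\mathcal{F}_{0}$.

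The combination of the two lemmas is essentially bookkeeping, so I expect the only real content to reside in this last step. The main subtlety I anticipate is making precise that centralizers in the Pauli group coincide with symplectic complements and that $\mathcal{F}_{0}$ being a subgroup is exactly what licenses the double-complement identity; once that correspondence is in place, the non-degeneracy of the symplectic form does all the work, and the corollary follows immediately.
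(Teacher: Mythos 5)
Your proposal is correct and follows the same route as the paper: the equality $\mathcal{P}_{0}=C\left(\mathcal{F}_{0}\right)$ is obtained by combining the two containments from Lemma~\ref{lemma1} and Lemma~\ref{lemma 6}, and the final claim rests on the double-centralizer identity $C\left(C\left(\mathcal{F}_{0}\right)\right)=\mathcal{F}_{0}$, which the paper invokes without proof (it is mentioned only in the introduction). Your explicit justification of that identity via the non-degenerate symplectic form on the Pauli group modulo its center, together with the observation that $\mathcal{F}_{0}$ being a genuine subgroup is what licenses the double-complement step, is a valid and welcome elaboration of a detail the paper leaves implicit.
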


\begin{theorem}
[Poulin, Tillich, Ollivier\ \cite{PTO09}]\label{thm:main-theorem}A
non-catastrophic encoder is non-recursive.
\end{theorem}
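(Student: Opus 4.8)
The plan is to establish non-recursiveness directly from Definition~\ref{def:recursive} by exhibiting a single admissible logical-weight-one path that begins on a zero physical-weight loop and returns to one. I would take the identity memory state $I^{\otimes m}$ as the starting vertex; it lies on the trivial zero physical-weight self-loop $I^{\otimes m}\,|\,I^{\otimes(n-k)}\,|\,I^{\otimes k}\rightarrow I^{\otimes n}\,|\,I^{\otimes m}$, so it qualifies. Feeding in a single non-identity Pauli $L$ on one information qubit (with any $S\in\{I,Z\}^{\otimes(n-k)}$ on the ancillas) gives a transition
\[
I^{\otimes m}\,|\,S\,|\,L\rightarrow P\,|\,M',
\]
and the whole argument reduces to locating $M'$ within the group structure already built up.

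The key step is to show that $M'$ commutes with every element of $\mathcal{P}_0$, and here non-catastrophicity enters decisively. Any $N\in\mathcal{P}_0$ appears as the output memory state of some row $N_i\,|\,\tilde S_i\,|\,\tilde L_i\rightarrow I^{\otimes n}\,|\,N$ of a zero physical-weight cycle; since the encoder is non-catastrophic, every such cycle has identity logical labels (Definition~\ref{def:catastrophic}), so $\tilde L_i=I^{\otimes k}$. Because the encoder is Clifford, conjugation by it preserves the commutation relation between the input Paulis and the corresponding output Paulis on the full $(m+n)$-qubit register. On the input side the memory parts ($I^{\otimes m}$ versus $N_i$), the ancilla parts ($S$ versus $\tilde S_i$, both in $\{I,Z\}$), and the information parts ($L$ versus $I^{\otimes k}$) all commute, so the inputs commute; on the output side the physical parts ($P$ versus $I^{\otimes n}$) commute automatically, forcing the memory parts to commute, i.e.\ $[M',N]=0$. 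The one and only place non-catastrophicity is used is in forcing $\tilde L_i=I^{\otimes k}$, which is exactly what prevents the weight-one $L$ from inducing an anticommutation on the information register.

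Having shown that $M'$ commutes with all memory states of all zero physical-weight cycles, I would invoke Corollary~\ref{corollary2}---equivalently, $\mathcal{P}_0=C(\mathcal{F}_0)$ together with the double-centralizer identity $C(C(\mathcal{F}_0))=\mathcal{F}_0$ for the Pauli group---to conclude $M'\in\mathcal{F}_0$. By the definition of $\mathcal{F}_0$ there is then a finite standard path from $M'$ to $I^{\otimes m}$, every edge of which carries an identity logical label. Concatenating the initial weight-one edge, this finite standard path, and the trivial loop at $I^{\otimes m}$ produces a path that starts on a zero physical-weight loop, has total logical weight one, and terminates on a zero physical-weight loop. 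It remains only to verify admissibility: the first edge carries a non-identity logical label, and in a non-catastrophic encoder no zero physical-weight cycle contains such an edge, so that first edge is not part of any zero physical-weight cycle. This path therefore witnesses non-recursiveness.

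I expect the main obstacle to already be discharged by the preceding lemmas, since the heart of the matter---the double-centralizer identity $\mathcal{P}_0=C(\mathcal{F}_0)$---is packaged in Corollary~\ref{corollary2}. Beyond that appeal, the remaining care is in the commutation-preservation step: one must state precisely that the Clifford encoder preserves commutation between input and output Paulis on the combined register, and must select, for each $N\in\mathcal{P}_0$, the specific cycle row in which $N$ is the \emph{output} memory state. Once those register-by-register commutation checks are laid out, the argument is essentially bookkeeping.
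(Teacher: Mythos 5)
Your proposal is correct and follows essentially the same route as the paper's own proof: a weight-one logical input from the identity memory state leads to a memory state that, by non-catastrophicity and preservation of commutation relations, lies in $C(\mathcal{P}_0)$, whence Corollary~\ref{corollary2} places it in $\mathcal{F}_0$ and the finite standard path to $I^{\otimes m}$ supplies the zero physical-weight loop that defeats recursiveness. Your version is somewhat more careful than the paper's in spelling out the admissibility of the first edge and the explicit witnessing path, but the substance is identical.
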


\begin{proof}
Consider the following transition, resulting from a weight-one logical input:%
\begin{equation}%
\begin{tabular}
[c]{c|c|c}%
Mem. & Anc. & Info.\\\hline\hline
$I^{\otimes m}$ & $I^{\otimes n-k}$ & $X(i)$%
\end{tabular}
\ \ \ \ \rightarrow\ \
\begin{tabular}
[c]{c|c}%
Phys. & Mem.\\\hline\hline
$h$ & $g$%
\end{tabular}
\ \ \ \ . \label{eq:X(i)-path}%
\end{equation}
For a non-catastrophic encoder, all of the edges in a zero physical-weight
cycle have zero logical weight, so that the LHS of (\ref{eq:X(i)-path})
commutes with the LHS of the transitions in these zero physical-weight cycles.
Thus, the RHSs commute as well. Since zero physical-weight cycles have the
identity operator acting on physical qubits, the above memory state $g$
commutes with all memory states of zero-physical weight cycles, so that $g\in
C(\mathcal{P}_{0})$. According to Corollary~\ref{corollary2}, the memory state
$g\in\mathcal{F}_{0}$, and according to Fact~\ref{fact:finite-mem-sub}, there
is a standard path from $g$ to the identity memory state, implying that the
encoder is non-recursive.
\end{proof}

\begin{remark}
Our proof of the above theorem illustrates a particular property of any
non-catastrophic quantum convolutional encoder.\ Transitions of the following
form%
\begin{equation}%
\begin{tabular}
[c]{c|c|c}%
Mem. & Anc. & Info.\\\hline\hline
$I^{\otimes m}$ & $S$ & $L$%
\end{tabular}
\ \ \ \ \rightarrow\ \
\begin{tabular}
[c]{c|c}%
Phys. & Mem.\\\hline\hline
$h_{S,L}$ & $g_{S,L}$%
\end{tabular}
\ \ \ ,
\end{equation}
for all $S\in\{I,Z \}^{\otimes n-k}$ and $L\in\left\{  I,X,Y,Z\right\}
^{\otimes k}\backslash\left\{  I^{\otimes k}\right\}  $ lead to a memory state
$g_{S,L}\in\mathcal{F}_{0}$, from which there is a standard path to the
identity memory state.
\end{remark}

\section{Discussion}

\label{sec:discussion}We have provided a proof of the statement
\textquotedblleft recursive quantum convolutional encoders are
catastrophic\textquotedblright\ that is arguably simpler than the one
originally furnished by Poulin, Tillich, and Olliver in Ref.~\cite{PTO09}. Our
approach was to show that the subgroup of memory states that are part of zero
physical-weight cycles is equivalent to the centralizer of the finite-memory
subgroup. From there, it was straightforward to prove the PTO\ theorem.

The proof given here also provides insight into why entanglement-assisted
quantum convolutional encoders can circumvent this no-go theorem. In
particular, an examination of the example from Figure~4 of Ref.~\cite{WH10b}
reveals that its zero physical-weight cycle subgroup consists of only the
identity memory state, so that its centralizer contains all memory states.
Yet, the finite-memory subgroup for this example is empty. Thus, the symmetry
between these two subgroups is broken by the introduction of entanglement, and
the no-go theorem no longer applies.

\paragraph{Acknowledgements}

We acknowledge Jean-Pierre Tillich for suggesting to us at QIP 2011 in
Singapore that it would be worthwhile to pursue a simpler proof of
Theorem~\ref{thm:main-theorem}. We also acknowledge useful discussions with
Min-Hsiu Hsieh on this topic, and we are grateful 
for the suggestions of the anonymous referee. MMW acknowledges support from the Centre de
Recherches Math\'{e}matiques at the University of Montreal.

\bibliographystyle{plain}
\bibliography{Ref}

\end{document}